\pgfplotsset{compat=1.8}
\newtheorem{lemma}{Lemma}
\newtheorem{definition}{Definition}
\newtheorem{proposition}{Proposition}
\newtheorem{remark}{Remark}
\newtheorem{notation}{Notation}
\newcommand\numberthis{\addtocounter{equation}{1}\tag{\theequation}}
\title{Bayesian Alternatives to the Black-Litterman Model}
\author[1]{Mihnea S. Andrei}
\author[1]{John S.J. Hsu}
\affil[1]{Statistics and Applied Probability,
  University of California, Santa Barbara}
\affil{{mandrei@ucsb.edu},hsu@pstat.ucsb.edu}
\date{\today}
\begin{document}

\maketitle

\begin{abstract}
The Black-Litterman model combines investors' personal views with historical data and gives optimal portfolio weights. In this paper we will introduce the original Black-Litterman model (section \ref{original BL}), we will modify the model such that it fits in a Bayesian framework by considering the investors' personal views to be a direct prior on the means of the returns and by adding a typical Inverse Wishart prior on the covariance matrix of the returns (section \ref{inv wishart P inv}). Lastly, we will use Leonard and Hsu's (1992)\cite{hsu1992} idea of adding a prior on the logarithm of the covariance matrix (section \ref{log(sigma) prior}). Sensitivity simulations for the level of confidence that the investor has in their own personal views were performed and performance of the models was assessed on a test data set consisting of returns over the month of January $2018$.
\end{abstract}

\section{The Original Black-Litterman Model}\label{original BL}
The Black-Litterman model was developed in the early $1990$'s and has been widely used for asset allocation. This model attempts to combine the market equilibrium \footnote{Please refer to subsection \ref{pi} for more details on how the market equilibrium is computed} with the investor's personal views. It can be shown that the optimal portfolio for an investor is the sum of a portfolio proportional to the market equilibrium portfolio and a weighted sum of portfolios reflecting their views. Please see section \ref{personal views} for an example of how personal views are created and section \ref{introduction to model} for more details on the model.

\subsection{Estimating the Market Equilibrium} \label{pi}

The market is in equilibrium when all investors hold the market portfolio, $w_{eq}$. It is when the demand for the assets in this portfolio equals the supply. If we denote by $\pi$ the market equilibrium returns, then the CAPM equation is $\pi=\lambda \Sigma w_{eq}$. Here, $\lambda$ is the investor's risk aversion coefficient and $\Sigma$ is the covariance matrix of the returns on the assets in the portfolio\cite{litterman2002}.

\subsection{Example of Personal Views} \label{personal views}
Let us see how personal views are inputted in the traditional model. For example, let us consider 4 assets: AAPL, AMZN, GOOG and MSFT. Also, let us consider that we believe that AAPL will outperform AMZN by $2\%$  and we also believe that GOOG will have returns that amount to $5\%$. The columns in $P$ represent the 4 stocks in the order in which we enumerated them previously. Each row in $P$ and $q$ represents a personal view:

\[
P=
\begin{bmatrix}
 1 & -1 & 0 & 0 \\
 0 & 0  & 1 & 0
\end{bmatrix},
q=
\begin{bmatrix}
 0.02 \\
 0.05
\end{bmatrix}
\]

Each personal view has associated with it an uncertainty that the investor has with respect to the view. The measures of confidence are entered as diagonal entries in a matrix $\Omega$. As we will see in the next section when we present the assumptions of the model (equations (\ref{TraditionalBL}), $\Omega$ is a covariance matrix. Hence, on the main diagonal we will have the variances of the returns for our personal views. Therefore, a small value reflects a high confidence in the view and vice versa (one approach of obtaining $\Omega$ is by simply manually imputing them and another approach is to estimate it). 

\subsection{The Black-Litterman Approach}\label{introduction to model}

Now that we have seen what are the individual pieces of the model, we are ready to present the mathematical formulation. We will consider that the investor is looking at $n$ assets and has $k$ different views on those assets.

The return of the assets is considered to be random, $r \sim N(\mu,\Sigma)$. 

We create a prior on the mean of this return: $\mu \sim N(\pi,\tau \Sigma)$. However, the proof would work in the exact same way for any general covariance matrix (please see Appendix \ref{proof traditional} for the proof). The variable $\pi$ represents the market equilibrium returns and it is obtained by using an equation equivalent to the CAPM (\ref{eq capm}): $\pi=\lambda\Sigma w_{eq}$, with $\lambda$ been the investor's risk aversion parameter, $w_{eq}$ been the market equilibrium weights and $\Sigma$ the covariance matrix. Also, here $\tau$ is considered to be a parameter that reflects the uncertainty in the CAPM prior. Notice that the smaller the $\tau$, the closer our $\mu$ will be to the market equilibrium returns $\pi$.

Besides this prior, we also have the investor's personal views: $P\mu \sim N(q,\Omega)$, where the same notation as in the previous section \ref{personal views} is used \cite{litterman2002}.

Hence, the model is represented by the following $3$ distributions, where the last 2 are priors: 
\begin{equation}	
\label{TraditionalBL}
\begin{aligned} 
r \sim N(\mu,\Sigma) \\
\mu \sim N(\pi,\tau \Sigma) \\		
P\mu \sim N(q,\Omega)
\end{aligned} 
\end{equation}

By combining the $2$ priors (the last $2$ distributions from above) and using that as a new prior on the distribution of the return, one can show that the posterior of the return is (please see the appendix \ref{proof traditional} for the proof): 
\begin{gather*}
r\sim N(\bar{\mu},M^{-1}+\Sigma)
\end{gather*}
Where the notation used is, 
\begin{gather}
M^{-1}=\left((\tau\Sigma)^{-1}+ P^T\Omega^{-1}P \right)^{-1} \\
\bar{\mu}=\left((\tau\Sigma)^{-1}+ P^T\Omega^{-1}P \right)^{-1}\left( (\tau\Sigma)^{-1}\pi+P^T\Omega^{-1}q \right)
\end{gather}
Also, let $\bar{\Sigma}=M^{-1}+\Sigma$

Coming back to the example of personal views given in the previous section, if the investor thinks that GOOG will outperform the other 3 companies, it is enough to put a long weight (positive weight) of $1$ on GOOG and short weights (negative weights) for the other $3$. In the Black-Litterman model, the return is considered to be random and we have just seen that the posterior distribution is also normal, but with a mean represented by equation $(3)$. This equation appropriately takes into account market volatility and correlations also. Let us further look at the weights, $w$, that one would obtain when using the posterior of the returns. The typical approach to the problem that an investor with risk aversion parameter $\lambda$ has when trying to maximize the returns of the portfolio while minimizing the risk is to maximize the function $w^T\bar{\mu}-\frac{\lambda}{2}w^T\bar{\Sigma}w$ with respect to $w$. By taking the derivative with respect to $w$, we obtain for the optimal weights an equation equivalent to CAPM: 
\begin{gather} \label{eq capm}
\overline{\mu}=\lambda \overline{\Sigma} w \Leftrightarrow  w=\frac{1}{\lambda}\bar{\Sigma}^{-1}\bar{\mu} 
\end{gather}
Using equation $(3)$ and the identity $(\Sigma+M^{-1})^{-1}=M-M(M+\Sigma^{-1})^{-1}M$, one can show that:
\begin{gather*}
w^*=\frac{1}{1+\tau}(w_{eq}+P^T\times\Delta) \\
\Delta=\tau\Omega^{-1}\frac{q}{\lambda}-A^{-1}P\frac{\Sigma}{1+\tau}w_{eq}-A^{-1}P\frac{\Sigma}{1+\tau}P^T\tau\Omega^{-1}\frac{q}{\lambda} \\
A=\frac{\Omega}{\tau}+P\frac{\Sigma}{1+\tau}P^T
\end{gather*}

The first equation from above shows that the Black-Litterman optimal portfolio weights are the equilibrium weights plus a weighted sum of the personal views ($P^T$ has as columns the views). Moreover, the three terms in the second equation have interpretations also: 
\begin{itemize}
\item The first term shows that the more importance is given to a view either by having a high return $q_k$ (the $k^{th}$ entry of $q$), or by having a smaller uncertainty $\omega_k$ (this is the $k^{th}$ diagonal entry in $\Omega$ and hence it will be equal to $\frac{1}{\omega_k}$ in $\Omega^{-1}$), the more weight it will carry into the optimal portfolio
\item The second term shows that a weight is penalized if the value of the product between the covariance and the market equilibrium return is high. This would indicate that the view carries less additional information.
\item The last term shows that an optimized weight is penalized if the covariance between different portfolio views is high. This makes sense since it would mean that different views add little new information.
\end{itemize}

One can also observe the fact that if an investor has a different risk aversion parameter, $\hat{\lambda}$, he can obtain the optimized portfolio weights by using the equation $\hat{w^*}=\frac{\lambda}{\hat{\lambda}}w^*$

\section{Inverse Wishart prior on $\Sigma$} \label{inv wishart P inv}

\subsection{Introduction}

First of all, we have to observe what we would like to potentially improve on the original approach. But what are the shortcomings of the original approach? 
\begin{itemize}
\item	From the proof presented in Appendix \ref{proof traditional}, we notice that it is not a true Bayesian approach since, as mentioned in the previous section, the distributions represented by the last $2$ equations in (\ref{TraditionalBL}) are combined to find a new prior for $\mu$. This is used as a prior for the return. 
\item	The original approach does not consider any current returns and we would like to model returns over a certain period of time (which will be the investor's investment horizon). Hence, we would like to have $r_1,r_2,...,r_m \sim N_n(\mu, \Sigma)$
\item	The covariance matrix $\Sigma$ is estimated from historical data in the original approach. Hence, we would like to put prior on the distribution of $\Sigma$ and the most common approach for covariance matrices when the data is from a normal distribution is to use an Inverse Wishart conjugate prior. 
\end{itemize}

Let $r_1,r_2,...,r_m \sim N_n(\mu,\Sigma)$ be the observed returns over a period of length $m$ for the $n$ assets that the investor is considering. Also, we would like to consider a prior on the mean of those returns that is projected directly by our personal views $P$. Therefore, we would have, similarly to the traditional model, $P\mu \sim N(q,\Omega)$. The difference is that we would like to add an additional Inverse Wishart prior $\Sigma \sim W^{-1}(\nu,\Sigma_0)$. Since the returns $r\in \mathbb{R}^{n}$, while the prior is on $P\mu \in \mathbb{R}^{k}$ ($k$ is the number of views), the prior is not fully specified, which would suggest the idea that we would need to create an invertible $P$. Also, once we have an invertible $P$, we can follow two approaches:
\begin{itemize}
\item Obtain the distribution of $\mu$, which could be easily done if $P$ is invertible.
\item From the very beginning transform the returns into the personal view space: $r_i^*=Pr_i$. This procedure will still require $P$ to be invertible since after obtaining the posterior in the transformed space, we have to be able to transform back. 
\end{itemize}
Hence, either way, we would need to have a matrix $P$ that is invertible and this brings us to the following discussion.

\subsection{Creating an Invertible $P$} \label{P inv}

The matrix of our personal views is very likely not to be invertible since, as we have seen in Section \ref{personal views}, we can have relative views (the ones for which the rows sum up to $0$) and we can have absolute views (only one $1$ in a row). Moreover, the $k$ views that we will have (the number of rows in $P$) will be smaller than the $n$ assets that we are considering to trade (the number of columns in $P$). In this section, we will present a method in which we can add rows to $P$ such that the resulting square matrix $P^*$ is invertible. The main idea is based on the way in which one would row reduce a matrix to the echelon form. 

It is well known that a matrix is invertible iff its row reduced echelon form is the identity matrix. This gives us the idea of taking our matrix $P$ and adding rows to it in order to make it invertible: 
\begin{itemize}
\item For each column in $P$ that has only $0$'s, we have to create a new row that will have only one $1$ in the respective column and $0$'s in all the others.
\item If a row has more than $1$ nonzero entry, for each one except the entries in the pivot columns, we have to create a row in which we have a $1$.
\end{itemize}

For example, if we consider the matrix $P$ from section \ref{personal views}, the above procedure gives us: 

\begin{equation*}
P=
\begin{bmatrix}
 1 & -1 & 0 & \color{red}0 \\
 0 & 0  & 1 & \color{red}0
\end{bmatrix}\rightarrow
\begin{bmatrix}
 1 & \color{red}-1 & 0 & 0 \\
 0 & 0  & 1 & 0 \\
 0 & 0 & 0 & 1 
\end{bmatrix}\rightarrow
\begin{bmatrix}
 1 & -1 & 0 & 0 \\
 0 & 0  & 1 & 0 \\
 0 & 0 & 0 & 1 \\
 0 & 1 & 0 & 0 
\end{bmatrix}=P^*=
\begin{bmatrix}
P \\
P_2
\end{bmatrix}
\end{equation*}
Please notice that we denoted by $P^*$ the augmented invertible matrix based on $P$, and by $P_2$ the part that was added to $P$.

We have seen at the beginning of the section that the model equations are: 
\begin{gather*}
r_1,r_2,...,r_m|\mu,\Sigma \sim N_n(\mu,\Sigma) \\
P\mu \sim N_k(q,\Omega) \\
\Sigma \sim W^{-1}(\nu,\Sigma_0)
\end{gather*}

As mentioned previously, we have $2$ approaches. We choose to transform the returns using our personal views and, therefore, let $r_i^*=P^*r_i$. Hence, $r_1^*,r_2^*,...,r_m^* \sim N_n(\mu^*,\Sigma^*)$, where $\mu^*=P^*\mu$ and $\Sigma^*=P^*\Sigma{P^*}^T$. Now, after the transformation onto the space of personal views, the 3 equations become: 

\begin{gather*}
r_1^*,r_2^*,...,r_m^* \sim N_n(\mu^*,\Sigma^*) \\
\mu^* \sim N_n(q^*,\Omega^*) \\
\Sigma^* \sim W^{-1}(\nu,\Sigma_0)
\end{gather*}

However, just like $\mu^*, \Sigma^*$ were clearly depending on $\mu, \Sigma$, we also have that $q^*, \Omega^*$ depend on the originals $q, \Omega$:

\begin{gather} \label{q*}
q^*=E \left[\mu^* \right]=E\left[ P^*\mu \right]=
E \left[\begin{bmatrix}
P \\
P_2
\end{bmatrix}\mu\right]=
\begin{bmatrix}
q \\
q_2
\end{bmatrix}
\end{gather}
\begin{gather} \label{Omega*}
\Omega^*=Var \left( P^*\mu \right)=Var \left( \begin{bmatrix}
P \\
P_2
\end{bmatrix}\mu \right)= 
\begin{bmatrix}
\Omega & P Var(\mu) P_2^T \\
P_2 Var(\mu) P^T & P_2 Var(\mu) P_2^T
\end{bmatrix}
\end{gather}

\subsection{Derivation of Posterior Distributions}

Now that we have found a method to augment $P$ to a matrix $P^*$ that is invertible and we also managed to create corresponding $q^*$ and $\Omega^*$, the problem is posed in a more typical Bayesian framework: 

\begin{equation}\label{ass1}
\begin{aligned}
r_1^*,r_2^*,...,r_m^* | \mu^*,\Sigma^* \sim N_n(\mu^*,\Sigma^*)
\end{aligned}
\end{equation}
\begin{equation}\label{ass2}
\begin{aligned}
\mu^* \sim N(q^*,\Omega^*)
\end{aligned}
\end{equation}
\begin{equation}\label{ass3}
\begin{aligned}
\Sigma^* \sim W^{-1}(\nu,\Sigma_0)
\end{aligned}
\end{equation}

From (\ref{ass1}), we obtain that the joint density of our returns is: 
\begin{align*}
 f(r_1^*,...,r_m^*|\mu^*,\Sigma^*)\propto det(\Sigma^*)^{-\frac{m}{2}}exp\left \{ -\frac{1}{2}\sum_{i=1}^m (r_i^*-\mu^*)^T{\Sigma^*}^{-1}(r_i^*-\mu^*)\right\}
\end{align*}

From (\ref{ass2}), we obtain that the density for $\mu^*$ is: 
\begin{align*}
 f(\mu^*) \propto det(\Omega^*)^{-\frac{1}{2}}exp \left \{ -\frac{1}{2} (\mu^*-q^*)^T{\Omega^*}^{-1}(\mu^*-q^*) \right \}
\end{align*}

Similarly, using (\ref{ass3}), we obtain that the density for $\Sigma^*$ is: 
\begin{align*}
 f(\Sigma^*) \propto det(\Sigma^*)^{-\frac{\nu+n+1}{2}}exp\left \{ -\frac{1}{2} Tr\left(\Sigma_0{\Sigma^*}^{-1}\right)\right \}
\end{align*}

Hence, by multiplying the above 3 equations, we obtain that the joint density for all of them is: 
\begin{equation} \label{joint dist}
\begin{aligned}
 f(r_1^*,...,r_m^*,\mu^*,\Sigma^*)\propto det(\Sigma^*)^{-\frac{\nu+m+n+1}{2}}exp \left \{ -\frac{1}{2} Tr\left(\Sigma_0{\Sigma^*}^{-1}\right)\right \} det(\Omega^*)^{-\frac{1}{2}}\cdot \\
exp\left \{-\frac{1}{2}\left((\mu^*-q^*)^T{\Omega^*}^{-1}(\mu^*-q^*)+ \sum_{i=1}^m (r_i^*-\mu^*)^T{\Sigma^*}^{-1}(r_i^*-\mu^*) \right ) \right \}
\end{aligned}
\end{equation}

Let us focus on the parenthesis in the second exponential and let us prove the following result. 

\begin{lemma} \label{r-mu identity}
The following equality holds, where $\bar r^*=\frac{\sum_{i=1}^m r_i^*}{m}$:
\begin{align*}
 \sum_{i=1}^m (r_i^*-\mu^*)^T{\Sigma^*}^{-1}(r_i^*-\mu^*)=&\sum_{i=1}^m (r_i^*-\bar r^*)^T{\Sigma^*}^{-1}(r_i^*-\bar r^*)+ \\
 &+m(\bar r^*-\mu^*)^T{\Sigma^*}^{-1}(\bar r^*-\mu^*)
\end{align*}
\end{lemma}
\begin{proof}
 We will start by manipulating the right hand side: 
 \begin{align*}
  RHS&=\sum_{i=1}^m({r_i^*}^T{\Sigma^*}^{-1}r_i^*-{r_i^*}^T{\Sigma^*}^{-1}\bar r^*-{\overline{r}^{*}}^{T}{\Sigma^*}^{-1}{r_i^*}+{\overline{r}^*}^T{\Sigma^*}^{-1}\overline{r}^*)+\\
  &m{\overline{r}^*}^T{\Sigma^*}^{-1}\bar r^*
     -m{\overline{r}^*}^T{\Sigma^*}^{-1}\mu^*-m{\mu^*}^T{\Sigma^*}^{-1}\bar r^*+m{\mu^*}^T{\Sigma^*}^{-1}\mu^*
  \end{align*}
   
   But since $m\bar r^*=\sum_{i=1}^m r_i^* \Rightarrow m{\overline{r} ^*}^T=\sum_{i=1}^m {r_i^*}^T$, we obtain that:
  
  \begin{align*}
  RHS&=\sum_{i=1}^m({r_i^*}^T{\Sigma^*}^{-1}r_i^*-{r_i^*}^T{\Sigma^*}^{-1}\bar r^*-{\overline{r}^*}^T{\Sigma^*}^{-1}r_i^*)+2m{\overline{r}^*}^T{\Sigma^*}^{-1}\bar r^*-\\
     &-\left (\sum_{i=1}^m {r_i^*}^T\right){\Sigma^*}^{-1}\mu^*-{\mu^*}^T{\Sigma^*}^{-1}\left(\sum_{i=1}^m r_i^* \right)+\sum_{i=1}^m{\mu^*}^T{\Sigma^*}^{-1}\mu^*=\\
     &=\sum_{i=1}^m({r_i^*}^T{\Sigma^*}^{-1}r_i^*-{r_i^*}^T{\Sigma^*}^{-1}\bar r^*-{\overline{r}^*}^T{\Sigma^*}^{-1}r_i^*)+2m{\overline{r}^*}^T{\Sigma^*}^{-1}\bar r^*-\\
     &-\sum_{i=1}^m {r_i^*}^T{\Sigma^*}^{-1}\mu^*-\sum_{i=1}^m{\mu^*}^T{\Sigma^*}^{-1} r_i^* +\sum_{i=1}^m{\mu^*}^T{\Sigma^*}^{-1}\mu^*
  \end{align*}
  
  We observe that ${\Sigma^*}^{-1}$ and $\bar r^*$ do not depend on the sum. Hence, we can factor them out:
  
  \begin{align*}
  RHS&=\sum_{i=1}^m({r_i^*}^T{\Sigma^*}^{-1}r_i^*-{r_i^*}^T{\Sigma^*}^{-1}\mu^*-{\mu^*}^T{\Sigma^*}^{-1}r_i^*+ {\mu^*}^T{\Sigma^*}^{-1}\mu^*)+\\
     &+2m{\overline{r}^*}^T{\Sigma^*}^{-1}\bar r^*-\left( \sum_{i=1}^m {r_i^*}^T\right){\Sigma^*}^{-1}\bar r^*- \bar r^*{\Sigma^*}^{-1}\left(\sum_{i=1}^m r_i^* \right)=\\
     &=\sum_{i=1}^m (r_i^*-\mu^*)^T{\Sigma^*}^{-1}(r_i^*-\mu^*)+2m{\overline{r}^*}^T{\Sigma^*}^{-1}\bar r^*- m{\overline{r}^*}^T{\Sigma^*}^{-1}\bar r^*- \\
     &-m{\overline{r}^*}^T{\Sigma^*}^{-1}\bar r^*=\sum_{i=1}^m (r_i^*-\mu^*)^T{\Sigma^*}^{-1}(r_i^*-\mu^*)
 \end{align*}

\end{proof}

Let us make a notation before we proceed: $s^2=\frac{\sum_{i=1}^m (r_i^*-\bar r^*)^T{\Sigma^*}^{-1}(r_i^*-\bar r^*)}{m-1}$ 

Now, by using \textbf{Lemma 1}, we are ready to come back to the paranthesis in the second exponential from the joint density of $(r_1^*,...,r_m^*,\mu^*,\Sigma^*)$ (equation (\ref{joint dist})):
\begin{gather*}
 (\mu^*-q^*)^T{\Omega^*}^{-1}(\mu^*-q^*)+ \sum_{i=1}^m (r_i^*-\mu^*)^T{\Sigma^*}^{-1}(r_i^*-\mu^*)=\\
 =(m-1)s^2+ m(\bar r^*-\mu^*)^T{\Sigma^*}^{-1}(\bar r^*-\mu^*)+(\mu^*-q^*)^T{\Omega^*}^{-1}(\mu^*-q^*)=\\
 =(m-1)s^2+(\bar r^*-\mu^*)^T(m{\Sigma^*}^{-1})(\bar r^*-\mu^*)+ (\mu^*-q^*)^T{\Omega^*}^{-1}(\mu^*-q^*) \numberthis \label{eq lemma 1}
\end{gather*}

\begin{lemma}\textbf{(Completing the square)} \label{lemma2}
For any $A \in {\mathbb{R}}^{p\times p}$ positive definite, $B \in {\mathbb{R}}^{p\times p}$ positive semi-definite and $a,b\in \mathbb{R}^{p}$ the following identity holds: 
\begin{gather*} 
 (y-a)^TA(y-a)+(y-b)^TB(y-b)=(y-y^*)^T(A+B)(y-y^*)+\\
 +(a-b)^TH(a-b),
\end{gather*}
where $y^*=(A+B)^{-1}(Aa+Bb)$ and $H=A(A+B)^{-1}B$. If, furthermore, $B$ is positive definite, then $H=(A^{-1}+B^{-1})^{-1}$.    \cite{hsu1999} 
\end{lemma}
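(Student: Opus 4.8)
The plan is to treat both sides as quadratic polynomials in $y$ and match them coefficient by coefficient. First note that $A$ positive definite together with $B$ positive semi-definite forces $A+B$ to be positive definite, hence invertible, so $y^*=(A+B)^{-1}(Aa+Bb)$ and $H=A(A+B)^{-1}B$ are well defined. Expanding the left-hand side and using the symmetry of $A$ and $B$, it equals $y^T(A+B)y-2y^T(Aa+Bb)+\big(a^TAa+b^TBb\big)$, a quadratic form in $y$ with matrix $A+B$. The standard completion of the square for a positive-definite matrix then rewrites it as $(y-y^*)^T(A+B)(y-y^*)+\big(a^TAa+b^TBb\big)-(Aa+Bb)^T(A+B)^{-1}(Aa+Bb)$, with $y^*$ exactly as defined above. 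Comparing with the claimed right-hand side, the $y$-dependent parts already coincide, so the lemma reduces to the single matrix identity
\[
 a^TAa+b^TBb-(Aa+Bb)^T(A+B)^{-1}(Aa+Bb)=(a-b)^TA(A+B)^{-1}B(a-b).
\]

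To prove this identity I would write $Aa+Bb=(A+B)b+A(a-b)$, substitute, and expand the quadratic form $(Aa+Bb)^T(A+B)^{-1}(Aa+Bb)$; the cross terms collapse via $(A+B)(A+B)^{-1}=I$ and the symmetry of $A$, leaving $b^T(A+B)b+2(a-b)^TAb+(a-b)^TA(A+B)^{-1}A(a-b)$. Substituting this back, the $b^TBb$ terms cancel, and using the scalar-type identity $a^TAa-b^TAb-2(a-b)^TAb=(a-b)^TA(a-b)$ (again by symmetry of $A$) the claim becomes $(a-b)^TA(a-b)-(a-b)^TA(A+B)^{-1}A(a-b)=(a-b)^TA(A+B)^{-1}B(a-b)$, which follows at once from $A-A(A+B)^{-1}A=A(A+B)^{-1}\big((A+B)-A\big)=A(A+B)^{-1}B$.

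For the last assertion, suppose in addition that $B$ is positive definite, so $B^{-1}$ exists. Then $B^{-1}(A+B)A^{-1}=B^{-1}AA^{-1}+B^{-1}BA^{-1}=B^{-1}+A^{-1}$, and taking inverses of both sides, using $(XYZ)^{-1}=Z^{-1}Y^{-1}X^{-1}$, gives $H=A(A+B)^{-1}B=\big(B^{-1}(A+B)A^{-1}\big)^{-1}=(A^{-1}+B^{-1})^{-1}$, as claimed.

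I expect the only delicate point to be the middle matrix identity: since $A$, $B$ and $(A+B)^{-1}$ need not commute, one must keep the order of the products straight and invoke symmetry only for genuine scalars. There is no real obstacle once the substitution $Aa+Bb=(A+B)b+A(a-b)$ is made, after which everything telescopes. As an alternative one could mimic the add-and-subtract algebra of Lemma~\ref{r-mu identity} with $y^*$ in place of $\bar r^*$, or reduce to the case $A=I$ by the change of variables $y\mapsto A^{1/2}y$; I would still prefer the direct expansion, since it never requires $B$ to be invertible and so covers the positive semi-definite case in one pass.
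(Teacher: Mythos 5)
Your proof is correct. Note that the paper does not actually prove this lemma---it only states it and cites Leonard and Hsu (1999)---so there is no in-paper argument to compare against; your direct expansion is a legitimate self-contained derivation. Each step checks out: the reduction to the constant-term identity via the paper's own completion-of-the-square device (their Lemma on $x^TMx-2b^Tx$ with $M=A+B$), the substitution $Aa+Bb=(A+B)b+A(a-b)$ that makes the cross terms telescope, the identity $A-A(A+B)^{-1}A=A(A+B)^{-1}B$, and the final inversion $B^{-1}(A+B)A^{-1}=A^{-1}+B^{-1}$ for the positive definite case. Your choice to argue without ever inverting $B$ is the right one, since the lemma as stated only assumes $B$ positive semi-definite; the only hypothesis genuinely used is that $A+B$ is invertible, which follows from $A$ being positive definite.
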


Since both of our normal distributions are not degenerated because we can have inverses for both $\Sigma^*$ and $\Omega^*$, we conclude that they do not have any eigenvalues equal to $0$. Moreover, since they are covariance matrices, we know that they are positive semi-definite. Therefore their eigenvalues are greater than or equal to $0$. But since they can't be $0$, we observe that they have to be strictly greater than $0$. This implies that both matrices are positive definite and therefore we can use the second formula for $H$ in \textbf{Lemma 2}.

Now we are ready to apply this result to equation (\ref{eq lemma 1}) for $y=\mu^*$, $a=\bar r^*$, $b=q^*$, $A=m{\Sigma^*}^{-1}$ and $B={\Omega^*}^{-1}$:
\begin{gather*}
 (\ref{joint dist})\Leftrightarrow (\ref{eq lemma 1}) \Leftrightarrow (m-1)s^2+(\mu^*-\overline{\mu^*})^T(m{\Sigma^*}^{-1}+{\Omega^*}^{-1})(\mu^*-\overline{\mu^*})+ \\
 +(\bar r^*-q^*)^TH(\bar r^*-q^*),
\end{gather*}
where $\overline{\mu^*}=(m{\Sigma^*}^{-1}+{\Omega^*}^{-1})^{-1}(m{\Sigma^*}^{-1}\bar r^*+{\Omega^*}^{-1} q^*)$ and $H=\left(\frac{1}{m}\Sigma^*+\Omega^*\right)^{-1}$.

If we go back with this result in the joint density represented by equation (\ref{joint dist}), we obtain that: 
\begin{gather*}
 f(r_1^*,...,r_m^*,\mu^*,\Sigma^*)\propto \\ 
 \propto det(\Sigma^*)^{-\frac{m}{2}}exp\left \{ -\frac{1}{2}(\mu^*-\overline{\mu^*})^T(m{\Sigma^*}^{-1}+{\Omega^*}^{-1})(\mu^*-\overline{\mu^*})\right \}\cdot \\ 
 \cdot exp\left \{-\frac{1}{2}(\bar r^*-q^*)^TH(\bar r^*-q^*)+(m-1)s^2 \right \} \cdot\\ 
 \cdot det(\Omega^*)^{-\frac{1}{2}}det(\Sigma^*)^{\frac{\nu+n+1}{2}}exp\left \{-\frac{1}{2}Tr\left(\Sigma_0{\Sigma^*}^{-1}\right) \right \}
\end{gather*}

Since the only part that depends on $\mu^*$ is the first line of the above equation, we conclude that: 

\begin{gather*} 
 f(\mu^*|r_1^*,...,r_m^*,\Sigma^*)\propto exp\left \{ -\frac{1}{2}(\mu^*-\overline{\mu^*})^T (m{\Sigma^*}^{-1}+{\Omega^*}^{-1})(\mu^*-\overline{\mu^*})\right \}\Rightarrow \\
 \mu^*|r_1^*,...,r_m^*,\Sigma^* \sim N_n \left(\overline{\mu^*},\overline{\Sigma^*}\right), \text{ where } \\
 \overline{{\mu^*}}=\left(m{\Sigma^*}^{-1}+{\Omega^*}^{-1}\right)^{-1}\left(m{\Sigma^*}^{-1}\bar r^*+{\Omega^*}^{-1} q^*\right)\\
 \overline{\Sigma^*}=\left(m{\Sigma^*}^{-1}+{\Omega^*}^{-1}\right)^{-1} \numberthis \label{posterior mu}
\end{gather*}

In order to find the posterior of $\Sigma^*$, it is easier to start from the original joint density represented by equation (\ref{joint dist}). By collecting the terms that depend on $\Sigma^*$ we obtain:
\begin{gather*}
 f(\Sigma^*|r_1^*,...,r_m^*,\mu^*)\propto det(\Sigma^*)^{-\frac{\nu+m+n+1}{2}}\\ 
 exp \left \{ -\frac{1}{2} \left ( \sum_{i=1}^m (r_i^*-\mu^*)^T{\Sigma^*}^{-1}(r_i^*-\mu^*)+ Tr\left(\Sigma_0{\Sigma^*}^{-1}\right) \right) \right \} \numberthis \label{eq8}
\end{gather*}
We notice that this is quite close to another Inverse Wishart distribution, the only step left that we have to make is to manipulate the exponential. Let us notice that 
\begin{gather*}
 \sum_{i=1}^m (r_i^*-\mu^*)^T{\Sigma^*}^{-1}(r_i^*-\mu^*)\in \mathbb{R} \Rightarrow \sum_{i=1}^m (r_i^*-\mu^*)^T{\Sigma^*}^{-1}(r_i^*-\mu^*)= \\ 
 =Tr\left (\sum_{i=1}^m (r_i^*-\mu^*)^T{\Sigma^*}^{-1}(r_i^*-\mu^*) \right)
 =\sum_{i=1}^m Tr \left((r_i^*-\mu^*)^T{\Sigma^*}^{-1}(r_i^*-\mu^*)\right)
\end{gather*}
But inside the $Trace$ matrices are cyclically commutative as long as the dimensions agree: 
\begin{gather*}
 \sum_{i=1}^m Tr \left((r_i^*-\mu^*)^T{\Sigma^*}^{-1}(r_i^*-\mu^*)\right)
 =\sum_{i=1}^m Tr \left((r_i^*-\mu^*)(r_i^*-\mu^*)^T{\Sigma^*}^{-1}\right)= \\
 =Tr \left(\sum_{i=1}^m(r_i^*-\mu^*)(r_i^*-\mu^*)^T{\Sigma^*}^{-1}\right)
\end{gather*}
Finally, by using this result and equation (\ref{eq8}), we obtain:
\begin{gather*}
 (\ref{eq8})\Leftrightarrow f(\Sigma^*|r_1^*,...,r_m^*,\mu^*)\propto det(\Sigma^*)^{-\frac{\nu+m+n+1}{2}}\\
 exp \left \{ -\frac{1}{2} Trace\left ( \left (\Sigma_0+\sum_{i=1}^m (r_i^*-\mu^*)(r_i^*-\mu^*)^T \right ){\Sigma^*}^{-1}\right) \right \}
\end{gather*}

We notice that this is the kernel of an Inverse Wishart distribution. Therefore, we can conclude that: 
\begin{gather} \label{sigma posterior}
 \Sigma^*|r_1^*,...,r_m^*,\mu^* \sim W^{-1}\left(\nu+m,\Sigma_0+\sum_{i=1}^m (r_i^*-\mu^*)(r_i^*-\mu^*)^T\right)
\end{gather}
 
Now that we have the posterior distributions, we can implement a Gibbs Sampler, which we will see in the following section, where we will also look at how the parameters of the model were estimated. 

\subsection{Implementation}   \label{implementation}

For implementation purposes, $4$ stocks were chosen: Apple(AAPL), Amazon(AMZN), Google(GOOG) and Microsoft(MSFT). Closing prices for the $4$ from 1/2/2015 until 5/1/2017 were considered and the returns were computed. Now, this data is split into $2$ parts, one representing the \textit{current data} (the last $m$ returns $r_1,...,r_m$, here $m=21$) and the rest representing \textit{historical data} used to estimate the parameters in the model. The reason why $m=21$ was chosen is because we are thinking of modeling the returns that happen within a period of approximately a month and $21$ is the average number of trading days in a month. Hence, in this example, the trading period for such an investor would be over a month. Next step is to augment $P$ as discussed in section \ref{P inv}. Once $P^*$ is created, we can just create our transformed returns $r_i^*=P^*r_i$. For this example, the personal views were (the columns represent AAPL, AMZN, GOOG, MSFT, respectively): 

\[
P=
\begin{bmatrix}
 1 & -1 & 0 & 0 \\
 0 & 0  & 1 & -1
\end{bmatrix},
q=
\begin{bmatrix}
 0.02 \\
 0.05
\end{bmatrix}
\]

If we look at the second assumption in the model represented by equation (\ref{ass2}), we notice that $q^*$ and $\Omega^*$ are, respectively, the mean and covariance matrix for $\mu^*$, which is in turn a mean of returns from a particular month (again, in this example $m=21$, approximately a month). Hence, one solution for estimating the parameters would be to take the returns from each month in the historical data and to compute their means. This way, we would have estimates for the monthly mean returns $\hat{\mu^*_i}$, with $i$ an integer between $1$ and the number of months in the historical data. Once we obtain those, we can estimate $\hat{q^*}$ and $\hat{\Omega^*}$ by taking the mean and the covariance of $\hat{\mu^*_i}$.

But, we have to remember that we need to reflect our personal views in the estimation presented above. In equations (\ref{q*}) and (\ref{Omega*}), we have showed how one should combine the estimates from the procedure just presented with the investor's personal views: 
\begin{itemize}
\item Equation (\ref{q*}) shows that we should take the $\hat{q^*}$ obtained through the above estimation and replace the first $k$ entries with $q$ ($k$, as mentioned at the beginning, was the number of personal views).
\item Equation (\ref{Omega*}) shows that we should take the obtained $\hat{\Omega^*}$ and replace the top left $k \times k$ matrix with our personal choice of $\Omega$.
\end{itemize}

Now that the parameters of our model are estimated, a typical Gibbs Sampler was used based on the posteriors represented by equations (\ref{sigma posterior}) and (\ref{posterior mu}).

\begin{algorithm}
\caption{Gibbs Sampler}
\begin{algorithmic}[1]
\scriptsize 
\STATE $ {\Sigma^*}^{(t+1)}|r_1^*,...,r_m^*,{\mu^*}^{(t)} \sim W^{-1}\left(\nu+m,\Sigma_0+\sum_{i=1}^m (r_i^*-{\mu^*}^{(t)})(r_i^*-{\mu^*}^{(t)})^T\right)$
\STATE  \begin{gather*}
 {\mu^*}^{(t+1)}|r_1^*,...,r_m^*,{\Sigma^*}^{(t+1)} \sim N_n \left({\overline{\mu^*}}^{(t+1)},{\overline{\Sigma^*}}^{(t+1)}\right), \text{ where } \\
 \overline{{\mu^*}^{(t+1)}}=\left(m{{\Sigma^*}^{(t+1)}}^{-1}+{\Omega^*}^{-1}\right)^{-1}\left(m{{\Sigma^*}^{(t+1)}}^{-1}\bar r^*+{\Omega^*}^{-1} q^*\right)\\
 {\overline{\Sigma^*}^{(t+1)}}=\left(m{{\Sigma^*}^{(t+1)}}^{-1}+{\Omega^*}^{-1}\right)^{-1} 
\end{gather*}
\end{algorithmic}
\end{algorithm}

A burning period of $10^3$ was chosen and the number of iterations for the Gibbs Sampler is $10^4$. After the Gibbs sampler is completed, one would only have to take the mean of the simulated ${\mu^*}^{(t)}$, call it $\hat{\overline{\mu^*}}$, and the average of the simulated ${\Sigma^*}^{(t)}$, call it $\hat{\overline{\Sigma^*}}$. However, one has to remember that those were transformed using $P^*$, hence now we would have to transform them back into the original space: $\hat{\overline{\mu}}={P^*}^{-1}\hat{\overline{\mu^*}}, \hat{\overline{\Sigma}}={P^*}^{-1}\hat{\overline{\Sigma^*}}{P^*}^{-T}$. Just like in the original model, in order to get the weights, one would use an equation similar to the CAPM one presented in section \ref{introduction to model}: $w=\frac{1}{\lambda}\hat{\overline{\Sigma}}^{-1}\hat{\overline{\mu}}$. Here, $\lambda=2.5$, as chosen in the original model. Also there has been extensive research when it comes to choosing $\lambda$. For trading stocks a risk aversion coefficient between $2$ and $3$ is reasonable.\cite{janecek2004}
Finally, we are ready to compare the results obtained under the original model with the ones obtained from this one. 

\subsection{Results Comparison} \label{results comparison}

Before we delve into how we compare the $2$ approaches, let us make the observation that in order to make any kind of comparison, one has to make sure that the same data sets were used and the parameters were estimated in the same way. Albeit the same personal views were imputed (same $P$, $\Omega$, $q$), the two approaches differ in the fact that the alternative one has a prior on $\Sigma$ and the original one makes use of the market equilibrium returns, which is estimated using $\pi=\lambda\Sigma w_{eq}$. In the following table, we ca look at the setup for both side by side: 
\begin{table}[H]
\centering
\begin{tabular}{lr}
Alternative & Original \\
\hline
$\begin{aligned}
r_1^*,r_2^*,...,r_m^* \sim N_n(\mu^*,\Sigma^*) \\
\mu^* \sim N(q^*,\Omega^*)
\end{aligned}$ & 
$\begin{aligned}
 r \sim N(\mu,\Sigma) \\
\mu \sim N(\pi,\tau \Sigma)
\end{aligned}$
\end{tabular}
\end{table}
Instead of the market equilibrium, the alternative approach simply has another parameter, which is estimated as mentioned in section \ref{implementation} (also the alternative has a prior on $\Sigma$ and takes into consideration current data). Besides this difference, the two are using the same data sets and the same parameters. 
Now, the question becomes how should one compare the two. One obvious approach would be to see how the two would perform if one would use them on the real market, which will be presented in the results section for the models that will follow later in this paper. However, it is of more interest to us to check how close to our personal opinion is the posterior mean obtained from the Gibbs Sampler. 
\begin{remark} \label{remark 1}
Since for both models we have that $P\mu \sim N(q,\Omega)$, the smaller the uncertainty in our views (the diagonal entries of $\Omega$), the smaller the standard deviation and, hence, the more certain the investor is about that particular view. 
\end{remark}
Hence, from the above remark, we will look at how $P \hat{\bar{\mu}}$ behaves as we look at small values for the diagonal entries of $\Omega$. But how should one define "small"? As we have seen in section \ref{implementation}, the expected returns for the views were 
$q=
\begin{bmatrix}
 0.02 \\
 0.05
\end{bmatrix}$.
Hence, even a value of $10^{-4}$ is quite large since this would be the variance of our view and, therefore, the standard deviation would become $10^{-2}$. Hence, a $95\%$ confidence interval for the first view would be $(0,0.04)$. If one tries to input even smaller $\omega$, the Inverse Wishart random generator gives a non-singularity error. Hence, we conclude that we compare the models on values of the diagonal of the matrix $\Omega$ that are between $0$ and $10^{-4}$. Albeit we can't input smaller $\omega$, for the purposes of checking the following remark, we changed $q$ to $q=\begin{bmatrix}
0.2 \\
0.5
\end{bmatrix}$. Hence, for both models an exhaustive method was implemented that would compute for each pair of diagonal entries in $\Omega$ a posterior mean $\hat{\overline{\mu}}$. Once this is obtained, the distance $|P\hat{\overline{\mu}}-q|$ can be calculated for both models. 
\begin{remark} \label{remark 2}
Since $P\mu \sim N(q,\Omega)$, we have that $\lim_{\Omega \to O_2}P\mu=q$ a.s.
\end{remark}
Therefore, as the diagonal entries of $\Omega$ get smaller and smaller we expect to get closer and closer to $q$. 
\begin{remark}
If we look at the posterior of $\mu^*$ we have that: 
\begin{gather*}
 \mu^*|r_1^*,...,r_m^*,\Sigma^* \sim N_n \left(\overline{\mu^*},\overline{\Sigma^*}\right), \text{ where } \\
 \overline{{\mu^*}}=\left(m{\Sigma^*}^{-1}+{\Omega^*}^{-1}\right)^{-1}\left(m{\Sigma^*}^{-1}\bar r^*+{\Omega^*}^{-1} q^*\right)\\
 \overline{\Sigma^*}=\left(m{\Sigma^*}^{-1}+{\Omega^*}^{-1}\right)^{-1}
\end{gather*}
If we consider a small $\Omega^*\Rightarrow {\Omega^*}^{-1}$ is large and therefore the whole term $m{\Sigma^*}^{-1}+{\Omega^*}^{-1} \approx {\Omega^*}^{-1}\Rightarrow (m{\Sigma^*}^{-1}+{\Omega^*}^{-1})^{-1} \approx \Omega^*$. Similarly, $(m{\Sigma^*}^{-1}\bar r^*+{\Omega^*}^{-1} q^*) \approx {\Omega^*}^{-1} q^*$ for small enough $\Omega^*$. Hence, we would expect that the mean of the simulated ${\mu^*}^{(t)}$ is close to $q^*$. Or, with the notation already used, $\hat{\overline{\mu^*}}\approx q^*$. Hence, by using the previous remark also, we obtain that $P({P^*}^{-1}\overline{\mu^*})\approx q$.
\end{remark}

The following graphs have as $2$ of the axis the $2$ diagonal entries in $\Omega$ and the third one represents the distance $|P\overline{\mu}-q|=|P({P^*}^{-1}\overline{\mu^*})-q|$ :

\begin{figure}[ht]
        \begin{minipage}[b]{0.45\linewidth}
            \centering
            \includegraphics[width=\textwidth]
{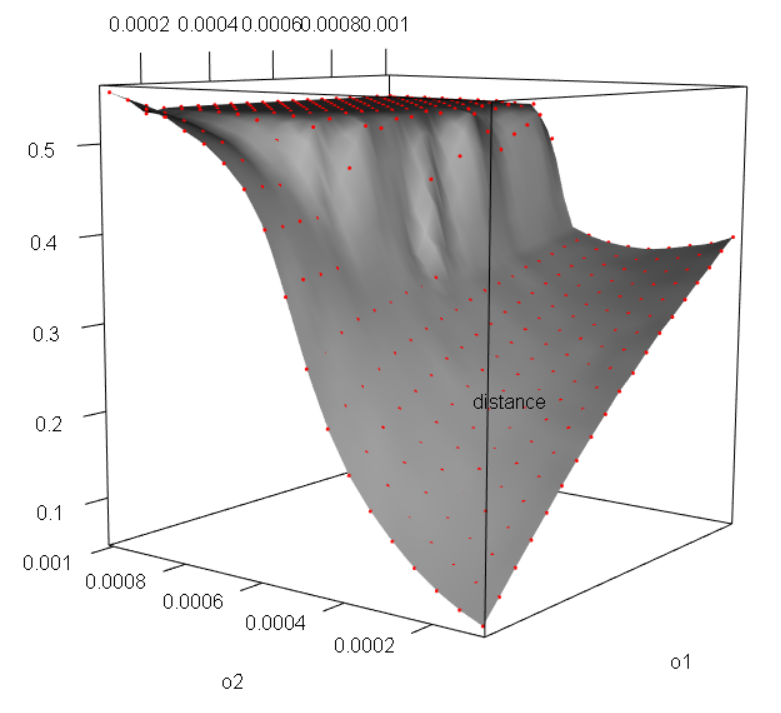}
            \caption{Results of $\Omega$ for the alternative model}
        \end{minipage}
        \hspace{0.5cm}
        \begin{minipage}[b]{0.45\linewidth}
            \centering
            \includegraphics[width=\textwidth]{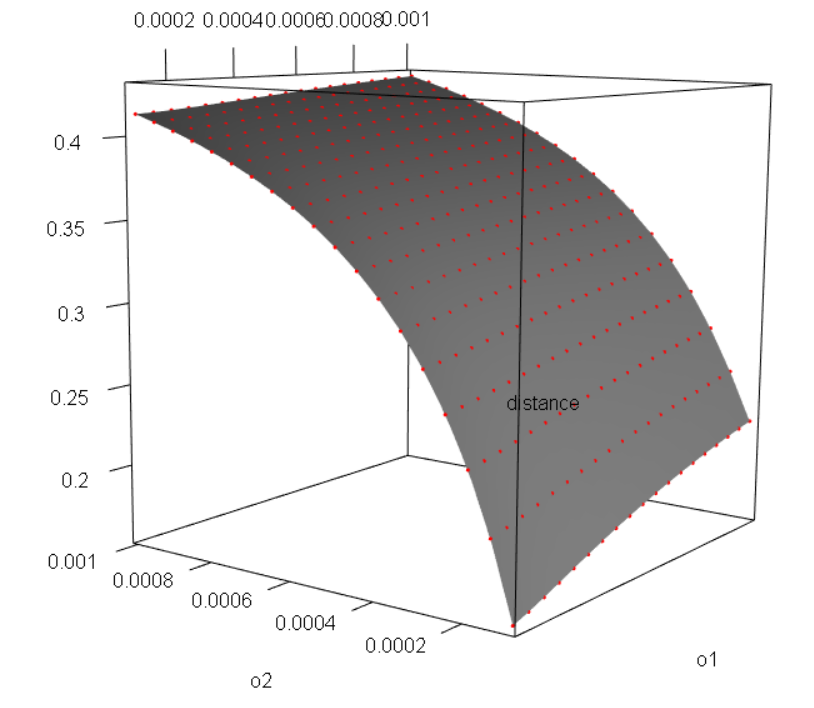}
            \caption{Results of $\Omega$ for original model}
        \end{minipage}
\end{figure}

We notice from the z-axis, which represents the distance mentioned above, that the modified model more closely follows the personal views.  We can look at some specific values of the distance for different pairs of $\omega_1$ and $\omega_2$ in table \ref{table 1}. 

\begin{table}[h!]
\centering
 \begin{tabular}{||c c c c||} 
 \hline
 $\omega_1$ & $\omega_2$ & Original & Alternative \\ [0.5ex] 
 \hline\hline
 $10^{-4}$ & 0.0001 & 0.154 & 0.052 \\ 
 \hline
 $10^{-4}$ & 0.00015 & 0.2 & 0.063 \\
 \hline
 $10^{-4}$ & 0.0002 & 0.235 & 0.077 \\
 \hline
 $10^{-4}$ & 0.00025 & 0.263 & 0.118 \\
 \hline
 $10^{-4}$ & 0.0003 & 0.286 & 0.118 \\  
 \hline
 $10^{-4}$ & 0.00035 & 0.305 & 0.145 \\ 
 \hline
 $10^{-4}$ & 0.0004 & 0.321 & 0.177 \\
 \hline
 $10^{-4}$ & 0.00045 & 0.335 & 0.222 \\
 \hline
 $10^{-4}$ & 0.0005 & 0.347 & 0.282 \\
 \hline
 $10^{-4}$ & 0.00055 & 0.357 & 0.354 \\ [1ex] 
 \hline
\end{tabular}
\caption{Table with specific distance values}
\label{table 1}
\end{table}

However, we would like to see if the structure of $P\hat{\overline{\mu}}$ is similar to $q$. For this we keep the two entries in $\Omega$ equal, we exhaustively search over small $\omega$ s.t. $\Omega=\omega\mathbb{I}$ and we plot the 2 entries of $P\hat{\overline{\mu}}$ together with the respective $\omega$. Please note that the blue point in figures \ref{alt vs orig} represent the exact value of $q=
\begin{bmatrix}
 0.2 \\
 0.5
\end{bmatrix}$, which would be obtained for $\omega=0$.

By comparing the $2$ figures, we notice that not only the point simulations represented by the red points are closer, but the whole curve (which was obtained by interpolation) seems to be closer to the theoretical value represented by the blue point. Also, we notice that in both cases, as $\omega$ increases, $P\hat{\overline{\mu}}$ gets further away from $q$, which is what theoretically should happen. 

\begin{figure}[H]
        \begin{minipage}[b]{0.45\linewidth}
            \centering
            \includegraphics[width=\textwidth]{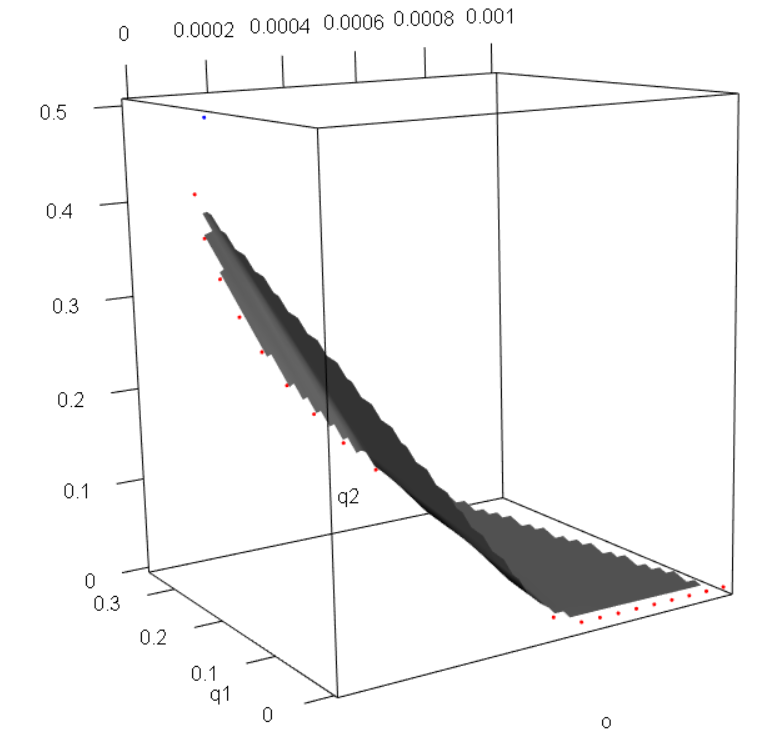}
            \caption{Results of $\Omega$ for the alternative model}
        \end{minipage}
        \hspace{0.5cm}
        \begin{minipage}[b]{0.45\linewidth}
            \centering
            \includegraphics[width=\textwidth]{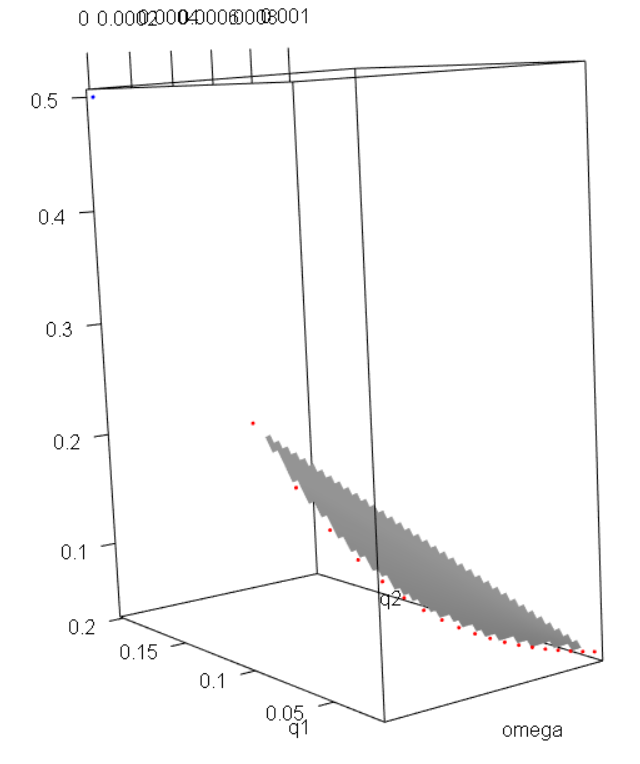}
            \caption{Results of $\Omega$ for original model}
        \end{minipage}
\end{figure} \label{alt vs orig}

\subsection{But do we need an invertible $P$?}

In section \ref{P inv}, we introduced a method of creating an invertible matrix $P$ by adding rows. Ideally, the matrix of views should not be modified in any way. Also, in general, one can't create an invertible matrix by adding rows, especially when the number of views is close to the number of stocks selected. Also, the method of augmenting $P$ is not unique. Therefore, we would like to be able to derive the posteriors when $P$ is unchanged from what the investor is inputting. Hence, in this section, we will consider the same setup as before, the only difference being the fact that $P$ is not even square: 

\begin{gather*}
r_1,r_2,...,r_m|\mu,\Sigma \sim N_n(\mu,\Sigma) \\
P\mu \sim N_k(q,\Omega) \\
\Sigma \sim W^{-1}(\nu,\Sigma_0)
\end{gather*}

Since $P$ shows up in the second equation of our model assumptions, the only posterior that will change from what we had previously will be that for $\mu$. Hence, in the joint distribution, we will consider only the terms depending on $\mu$: 

\begin{gather*}
f(\mu|r_1,...,r_m)\propto exp \left\lbrace  -\frac{1}{2}  \sum_{i=1}^m (r_i-\mu)^T \Sigma^{-1} (r_i-\mu) \right\rbrace \cdot \\
\cdot exp \left\lbrace -\frac{1}{2}(P\mu-q)^T \Omega^{-1} (P\mu-q)  \right\rbrace
\end{gather*}

For the first exponential we can use Lemma \ref{r-mu identity}. This yields: 
\begin{gather*}
f(\mu|r_1,...,r_m)\propto exp \left\lbrace  -\frac{1}{2} \left( (m-1)s^2+m(\overline{r}-\mu)^T\Sigma^{-1}(\overline{r}-\mu) \right)  \right\rbrace \cdot \\
\cdot exp \left\lbrace -\frac{1}{2}(q-P\mu)^T \Omega^{-1} (q-P\mu)  \right\rbrace
\end{gather*}

We remember that $s^2=\frac{\sum_{i=1}^m (r_i-\bar r)^T{\Sigma}^{-1}(r_i-\bar r)}{m-1}$ and hence this term does not depend on $\mu$. Now, let us focus on the remaining terms in the exponential: 
\begin{gather*}
(\overline{r}-\mu)^T(m\Sigma^{-1})(\overline{r}-\mu)+ (q-P\mu)^T \Omega^{-1} (q-P\mu)=\\
=\overline{r}^T(m\Sigma^{-1})\overline{r}-2\overline{r}^T(m\Sigma^{-1})\mu+\mu^T(m\Sigma^{-1})\mu+q^T\Omega^{-1}q-2q^T\Omega^{-1}P\mu+\\
+\mu^TP^T\Omega^{-1}P\mu=\mu^T\left( m\Sigma^{-1}+P^T\Omega^{-1}P \right)\mu-2\left( \overline{r}^T(m\Sigma^{-1})+q^T\Omega^{-1}P \right)\mu+\\
+\overline{r}^T(m\Sigma^{-1})\overline{r}+q^T\Omega^{-1}q
\end{gather*}

Since only the first two terms depend on $\mu$, we obtain that: 
\begin{gather*}
f(\mu|r_1,...,r_m,\Sigma)\propto exp \left\lbrace  -\frac{1}{2} \mu^T(m\Sigma^{-1}+P^T\Omega^{-1}P)\mu \right\rbrace \cdot \\
\cdot exp \left\lbrace -\frac{1}{2}2(m\Sigma^{-1}\overline{r}+P^T\Omega^{-1}q)^T\mu \right\rbrace
\end{gather*}

\begin{lemma}
Let $M$ be a symmetric and invertible matrix, then the following identity holds: 
\begin{gather*}
x^TMx-2b^Tx=(x-M^{-1}b)^TM(x-M^{-1}b)-b^TM^{-1}b
\end{gather*}
\end{lemma}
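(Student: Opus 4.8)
The plan is to prove the identity by simply expanding the right-hand side and collecting terms, relying only on the algebra of matrix transposition together with two consequences of the hypotheses: that $M^{-1}$ is symmetric whenever $M$ is (since $(M^{-1})^T=(M^T)^{-1}=M^{-1}$), and that a scalar equals its own transpose, so that $x^Tb=b^Tx$ for vectors $x,b$ of matching dimension.

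Concretely, I would first expand
\begin{gather*}
(x-M^{-1}b)^TM(x-M^{-1}b)=x^TMx-x^TM(M^{-1}b)-(M^{-1}b)^TMx+(M^{-1}b)^TM(M^{-1}b).
\end{gather*}
The second term simplifies to $x^T(MM^{-1})b=x^Tb$, and the third term, using the symmetry of $M^{-1}$, becomes $b^T(M^{-1})^TMx=b^T(M^{-1}M)x=b^Tx$; since both are scalars they are equal, so together they contribute $-2b^Tx$. The fourth term becomes $b^T(M^{-1})^TM M^{-1}b=b^TM^{-1}b$ by the same two observations. Hence the expansion equals $x^TMx-2b^Tx+b^TM^{-1}b$, and subtracting the trailing $b^TM^{-1}b$ on the right-hand side of the claimed identity leaves exactly $x^TMx-2b^Tx$, which is the left-hand side.

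There is essentially no obstacle here; the only point requiring a moment's care is justifying that $M^{-1}$ inherits symmetry from $M$ (needed to turn $(M^{-1}b)^T$ into $b^TM^{-1}$) and that the two linear cross-terms coincide because each is a $1\times 1$ matrix. Everything else is routine bookkeeping, so I would state the two observations once at the start and then carry out the expansion in a single short display. This lemma will then be applied with $M=m\Sigma^{-1}+P^T\Omega^{-1}P$ and $b=m\Sigma^{-1}\overline r+P^T\Omega^{-1}q$ to identify the posterior of $\mu$ as Gaussian with mean $M^{-1}b$ and covariance $M^{-1}$.
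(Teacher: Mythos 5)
Your proposal is correct and follows exactly the paper's own route: expand the quadratic form $(x-M^{-1}b)^TM(x-M^{-1}b)$, use the symmetry of $M^{-1}$ and the fact that the scalar cross-terms coincide, and collect. Your version is in fact slightly more careful than the paper's, which compresses the two cross-terms into $-2b^TM^{-1}Mx$ without comment.
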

\begin{proof}
We just need to expand the quadratic term:
\begin{gather*}
(x-M^{-1}b)^TM(x-M^{-1}b)=x^TMx-2b^TM^{-1}Mx+b^TM^{-1}MM^{-1}b=\\
=x^TMx-2b^Tx+b^TM^{-1}b
\end{gather*}
\end{proof}

Hence, if we apply this lemma for $x=\mu$, $M=m\Sigma^{-1}+P^T\Omega^{-1}P$ and $b=m\Sigma^{-1}\overline{r}+P^T\Omega^{-1}q$, we obtain that the exponential in the distribution of the posterior of $\mu$ is (the $-\frac{1}{2}$ still sits in front of the formula, we just omit it in the following for simplicity of writing): 
\begin{gather*}
\left( \mu-(m\Sigma^{-1}+P^T\Omega^{-1}P)^{-1}(m\Sigma^{-1}\overline{r}+P^T\Omega^{-1}q) \right)^T(m\Sigma^{-1}+P^T\Omega^{-1}P)\cdot \\
\cdot \left( \mu-(m\Sigma^{-1}+P^T\Omega^{-1}P)^{-1}(m\Sigma^{-1}\overline{r}+P^T\Omega^{-1}q) \right)-b^TM^{-1}b
\end{gather*}

Lastly, we notice that $b$ and $M$ do not depend on $\mu$, and hence, the posterior of $\mu$ is dictated by the first big term, which is actually the density of a normal distribution: 
\begin{gather*}
\boxed{
\mu|r_1,...,r_m,\Sigma \sim N\left(\mu_{post}, \Sigma_{post} \right)} ,\text{ where } \\
\boxed{
\mu_{post}=(m\Sigma^{-1}+P^T\Omega^{-1}P)^{-1}(m\Sigma^{-1}\overline{r}+P^T\Omega^{-1}q)} \\
\boxed{
\Sigma_{post}=\left( m\Sigma^{-1}+P^T\Omega^{-1}P \right)^{-1}
}
\end{gather*}

This posterior is very close to the one obtained by using the first approach (represented by equation (\ref{posterior mu})), the only difference being the fact that in this new approach the matrix $P$ shows up. This is because here we did not change the investor inputted matrix $P$, while in the previous approach we augmented $P$ in order for it to be invertible. 

\subsection{Implementation}

Implementing this model is straightforward since it is very similar to the previous version. The only difference is the fact that in the posterior for $\mu$ we have $P$ appearing, while in the previous model there was no $P$ since we were adding rows to it so that it becomes invertible. We remind ourselves that this was the first approach because we can take the inverse and easily find the prior distribution of $\mu$ from the prior distribution of $P\mu$. Using the derived posteriors, the Gibbs Sampler is:

\begin{algorithm}
\caption{Gibbs Sampler}
\begin{algorithmic}[1]
\scriptsize 
\STATE $ {\Sigma}^{(t+1)}|r_1,...,r_m,{\mu}^{(t)} \sim W^{-1}\left(\nu+m,\Sigma_0+\sum_{i=1}^m (r_i-{\mu}^{(t)})(r_i-{\mu}^{(t)})^T\right)$
\STATE  \begin{gather*}
{\mu}^{(t+1)}|r_1,...,r_m,{\Sigma}^{(t+1)} \sim N\left({\mu_{post}}^{(t+1)}, {\Sigma_{post}}^{(t+1)} \right) ,\\
{\mu_{post}}^{(t+1)}=(m{\Sigma^{(t+1)}}^{-1}+P^T\Omega^{-1}P)^{-1}(m{\Sigma^{(t+1)}}^{-1}\overline{r}+P^T\Omega^{-1}q) \\
{\Sigma_{post}}^{(t+1)}=\left( m{\Sigma^{(t+1)}}^{-1}+P^T\Omega^{-1}P \right)^{-1}
\end{gather*}
\end{algorithmic}
\end{algorithm}

\subsection{Results} \label{results P nonsqr Inv Wish}

Just like before, we will try to look at the sensitivity of our model to different confidence levels. Remarks \ref{remark 1} and \ref{remark 2} made when we presented the results for the previous model still hold. Since in $\Omega$ we have on the main diagonal (call them $\omega_i$) the variances in our views $P\mu$, the smaller the $\omega_i$, the more certain we are in view $i$. This should also be reflected in our posterior: if we provide very large $\omega_i$, it means that we are very uncertain about the views and the model should take into consideration the history a lot more, while if we provide very small values for $\omega_i$, it means that we are very certain about the views and the model should take them into consideration a lot more than the history. 

Just like before, in order to quantify and visualize the model's sensitivity to different confidence levels, we will look at the distance $|P\mu_{post}-q|$ (which will be on one of the axis in our plots) over different combinations of $\omega_i$. The same $4$ stocks from before were chosen (AAPL,AMZN,GOOG,MSFT), but since this work is more recent, the daily returns are from 1/2/2014 to 12/29/2017. The views are (rows are views and the columns represent the 4 stocks in the order AAPL,AMZN,GOOG,MSFT): 

\begin{gather*}
q=\begin{bmatrix}
0.02\\
0.05
\end{bmatrix},
P=\begin{bmatrix}
-1 & 1 & 0 & 0\\
0 & 0 & 1 & -1\\
\end{bmatrix}
\end{gather*}

When it comes to the confidence levels in the $2$ views, one can input values as small as $10^{-7}$ without encountering any numerical issues, like we did previously when we were augmenting the matrix $P$. Hence, one doesn't need to make any change to the model when implementing it or when inputting any value. We take a grid of equally spaced points $(\omega_1,\omega_2)$ between $10^{-7}$ and $2\cdot 10^{-5}$. The burn period was set to $10^3$ and the number of iterations in the Gibbs Sampler was set to $10^4$.

However, one could also use the same views, but considering the daily returns for the whole $S\&P 500$ instead of just for $4$ stocks. For this, we need the daily returns of companies actively traded in $S\&P 500$ over the period mentioned before. We won't have to change $q$ at all, but $P$ has more columns since they would represent the stocks in the famous index and it will still have $2$ rows for the same $2$ views. One would fill out $P$ by making sure that in the first row and the column corresponding to AAPL we will have a $-1$, in the first row and the column corresponding to AMZN we will have a $1$ and similarly for the second row. Of course, the dimension of some of the matrices and vectors will be much bigger and therefore, all computations will be more expensive. Hence, this version was parallelized and the number of iterations in the Gibbs Sampler decreased to $10^3$ (as we will see, even with so few iterations, convergence for the mean is achieved, but convergence for the covariance matrix is not). The interval $10^{-7}$ to $10^{-5}$ for the confidence levels was split into $4$ parts, in the following way: 

\begin{gather*}
(\omega_1,\omega_2)\in \begin{cases}
\left( 10^{-6}\rightarrow 10^{-5},10^{-6}\rightarrow 10^{-5} \right)\\
\left( 10^{-5}\rightarrow 10^{-4},10^{-5}\rightarrow 10^{-4} \right)\\
\left( 10^{-6}\rightarrow 10^{-5},10^{-5}\rightarrow 10^{-4} \right)\\
\left( 10^{-5}\rightarrow 10^{-4},10^{-6}\rightarrow 10^{-5} \right)
\end{cases}
\end{gather*}

Each one of the $4$ ranges from above was divided into a grid of $7^2$ points. Each point was ran on one core, taking a little more than $4$ hours to run. 

\begin{itemize}
\item	When $\omega_1=10^{-6}$, a $95\%$ confidence interval for the first view would be $(0.018,0.022)$, which would show that the investor is very confident.
\item	When $\omega_1=10^{-4}$, a $95\%$ confidence interval for the first view would be $(0,0.04)$, which would show that the investor is not as confident.
\end{itemize}

\begin{figure}[ht]
        \begin{minipage}[b]{0.45\linewidth}
            \centering
            \includegraphics[width=\textwidth]
{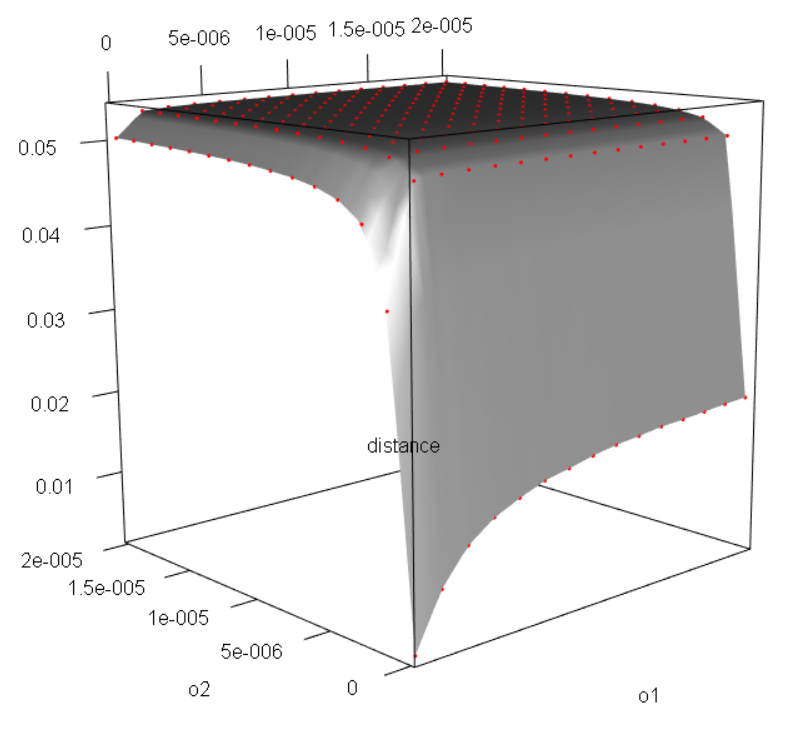}
            \caption{$|P\mu_{post}-q|$ when taking only the $4$ stocks}
        \end{minipage}
        \hspace{0.5cm}
        \begin{minipage}[b]{0.45\linewidth}
            \centering
            \includegraphics[width=\textwidth]{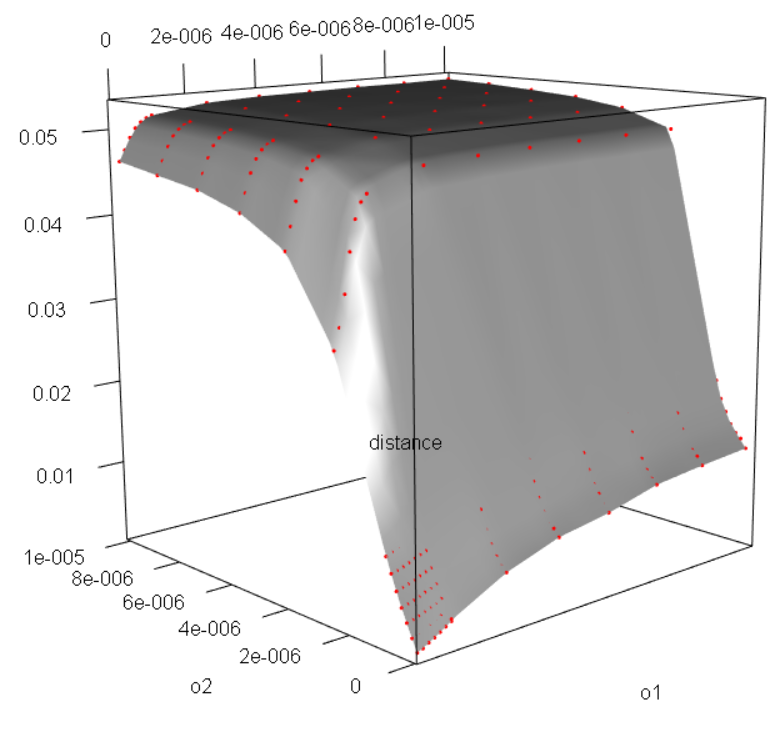}
            \caption{$|P\mu_{post}-q|$ when taking $S\&P 500$}
        \end{minipage}
\end{figure}

In the figures presented, we notice that both curves have similar shapes, albeit the one on the right converges faster to $0$ as $o_i$ become smaller ($\omega_i$ in our model). Also, the curve on the right seems to be underneath the one on the left. Intuitively, this is because there is a lot more information in our prior for $\Sigma$ when we take the whole $S\&P 500$. 
Moreover, both have very similar shapes. The distances go to $0$ as $\omega_i$ go to $0$. This is in tune with our intuition of how the model should behave like: as one gets more and more confident in their inputted views, the model should put a lot of importance on them and not on the historical data. Vice-versa, in both figures the distance seems to converge to a certain value as $\omega_i$ become bigger and bigger. Again, this is what we would think that the model should do since large $\omega_i$, suggests that one is uncertain about the personal view and therefore, the history should play a more important role. Indeed, if we would only take the historical returns, an unbiased estimate for $\mu$ is $\overline{r}$ and the distance becomes $|P\overline{r}-q|=0.05388875$, which is what the plots seem to tend to converge to.

We will move our focus towards looking at the profits (or losses) that one would obtain when using the model to trade over the month of January 2018 (testing data consisting of daily returns between 1/2/2018 and 1/30/2018) using an initial capital of $100,000\$$ (this does not include any capital requirements for short selling). We remember that in order to get portfolio weights we use the same approach as before. From Gibbs Sampling we estimate $\mu_{post}$ and $\Sigma_{post}$ and we use the CAPM equation \ref{eq capm}: $w=\frac{1}{2.5}{\Sigma_{post}}^{-1}\mu_{post}$.

Albeit when we took the whole $S\&P 500$ the number of iterations in the Gibbs Sampler was small, we notice from the above analysis that we still get very good estimates for $\mu_{post}$ since the posterior distance behaves exactly like our intuition suggests it should do. The running averages for the mean also converge fast for small $\omega_i$. However, because of the size of $\Sigma_{post}$ and because of the fact that one has to take its inverse in order to compute the portfolio weights $w$, the number of iterations is not enough to give accurate predictions of profits. Nevertheless, for completeness, the average profit when considering the whole $S\&P 500$ is $13191.39\$$ with a standard deviation of $2908.134\$$. 

We will now present the profits obtained when using only $4$ stocks. We notice that the first view has a bigger impact on the profits curve than the second view. Moreover, as the confidence in the first view increases (as $\omega_1$ goes to $0$), the profits sky rocket. This is because over the month of January 2018 AMZN outperformed AAPL by $23.997\%$ and our view was indeed that AMZN will overrun AAPL (albeit by only $2\%$, a $10^{th}$ of what actually happened in reality).

\begin{figure}[ht]
	\centering
        \begin{minipage}[b]{0.6\linewidth}
            \centering
            \includegraphics[width=\textwidth]
{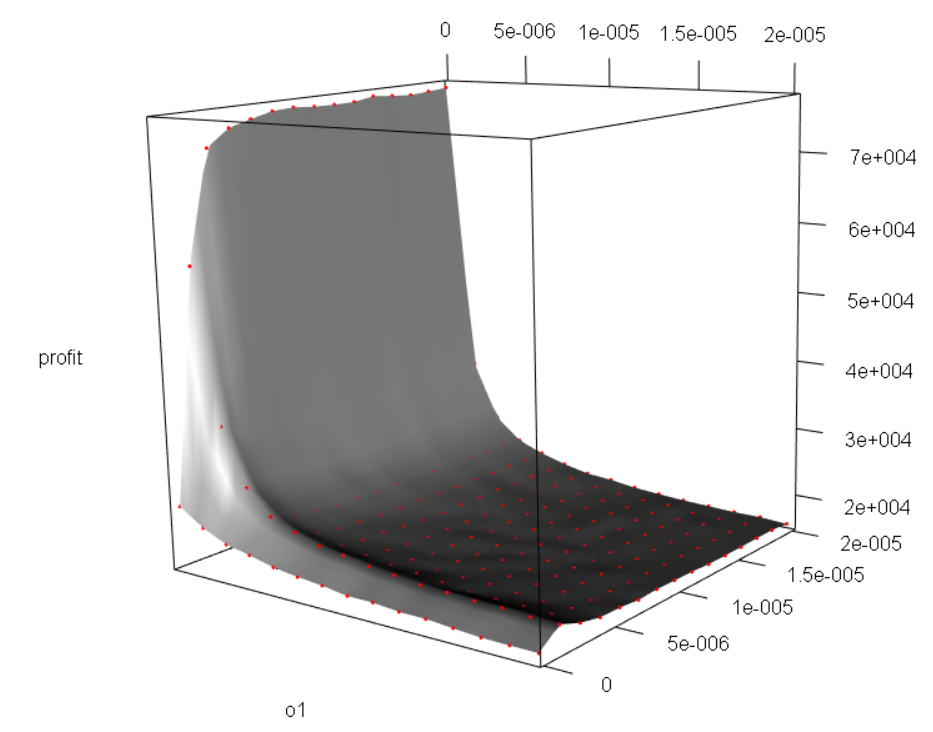}
            \caption{Profit when taking only $4$ stocks}
        \end{minipage}
\end{figure}

AMZN outperforming AAPL by nearly $24\%$ in one month is uncommon. Therefore, next we will present the same results, the only change made is that we replace AMZN with FB (Facebook). The same data sets were used and all other inputs stay exactly the same as we just presented at the beginning of this section, except $q$. We will also look at how the model behaves when the investor inputs a personal view exactly like what happened during the month of January 2018 (very "informed" investor) and exactly the opposite of what happened during the month of January (very "uninformed" investor). Therefore, we will also look at what happens when we choose $q=\begin{bmatrix}
0.06212815\\
0.01366718
\end{bmatrix}$ and $q=-\begin{bmatrix}
0.06212815\\
0.01366718
\end{bmatrix}$, respectively. 

\begin{figure}[ht]
        \begin{minipage}[b]{0.29\linewidth}
            \centering
            \includegraphics[width=\textwidth]
{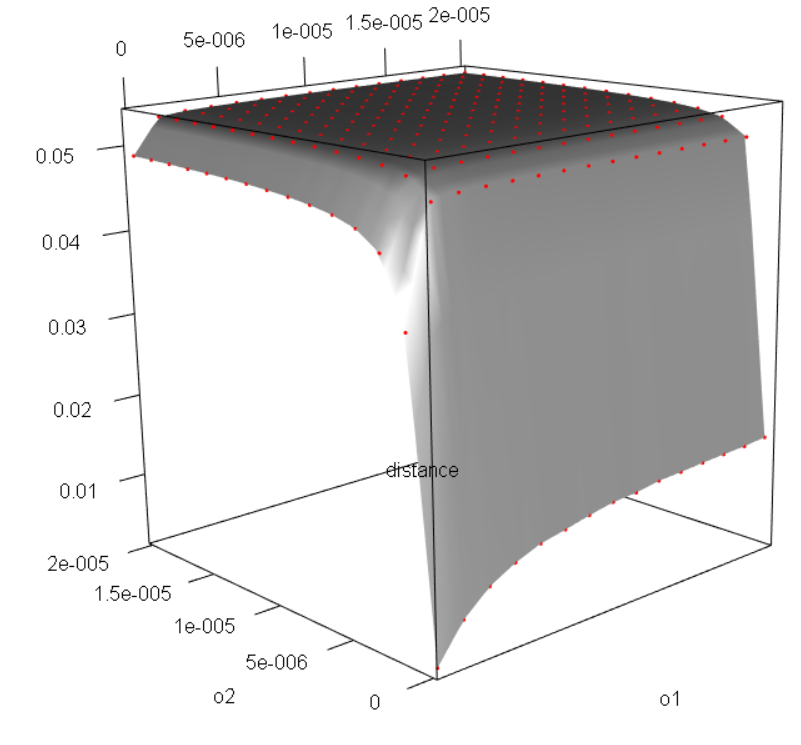}
            \caption{$4$ stocks,FB in and $q={[0.02,0.05]}^T$}
        \end{minipage}
        \hspace{0.5cm}
        \begin{minipage}[b]{0.29\linewidth}
            \centering
            \includegraphics[width=\textwidth]{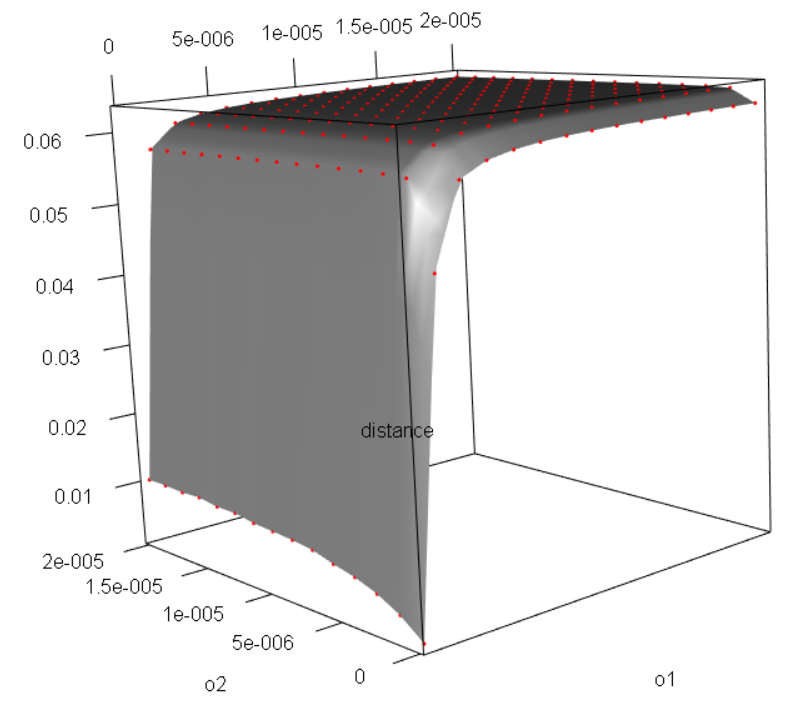}
            \caption{$4$ stocks, FB in and view exactly like reality}
        \end{minipage}
        \hspace{0.5cm}
        \begin{minipage}[b]{0.29\linewidth}
            \centering
            \includegraphics[width=\textwidth]{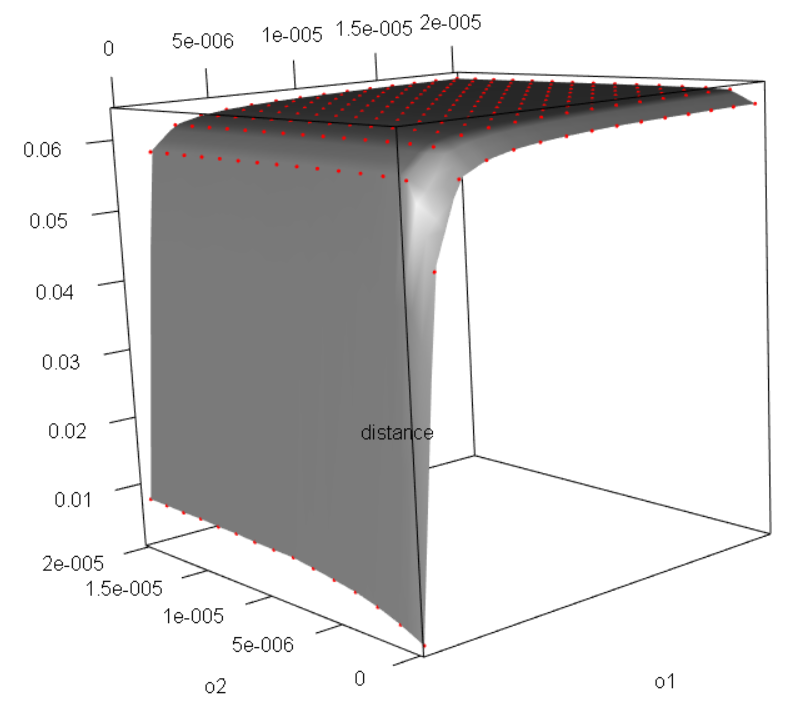}
            \caption{$4$ stocks, FB in and view opposite of reality}
        \end{minipage}
\end{figure}

\begin{figure}[ht]
        \begin{minipage}[b]{0.29\linewidth}
            \centering
            \includegraphics[width=\textwidth]
{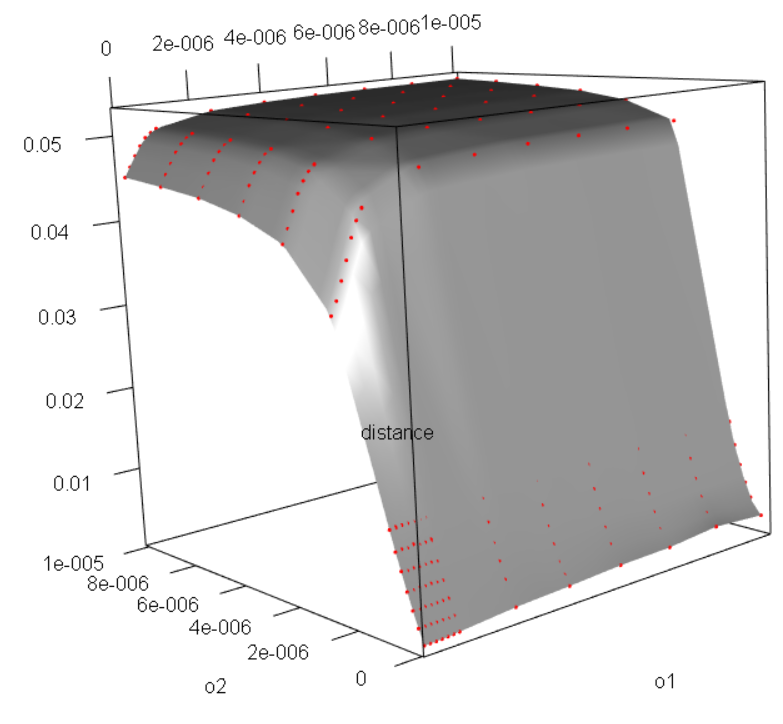}
            \caption{$S\&P 500$, FB in and $q={[0.02,0.05]}^T$}
        \end{minipage}
        \hspace{0.5cm}
        \begin{minipage}[b]{0.29\linewidth}
            \centering
            \includegraphics[width=\textwidth]{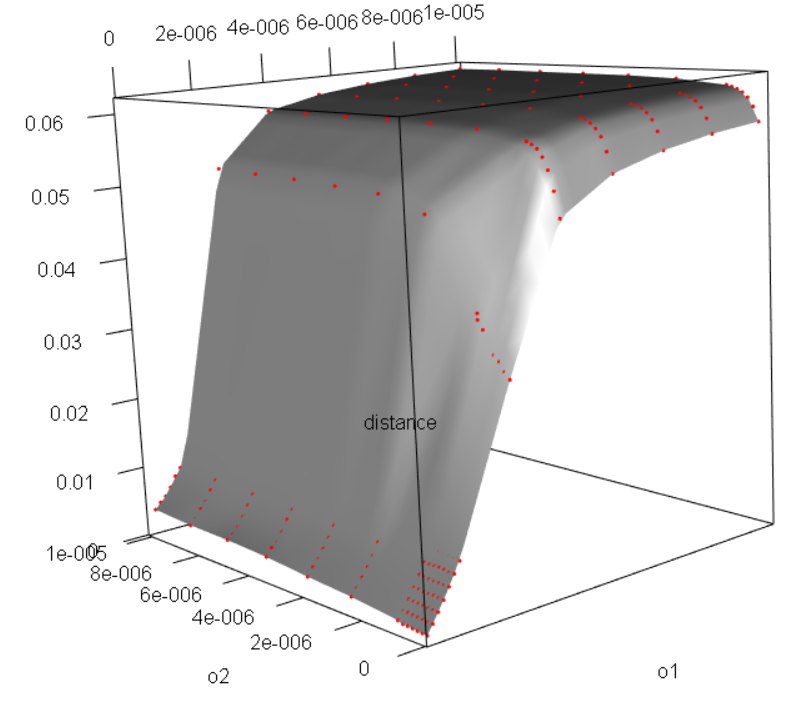}
            \caption{$S\&P 500$, FB in and view exactly like reality}
        \end{minipage}
        \hspace{0.5cm}
        \begin{minipage}[b]{0.29\linewidth}
            \centering
            \includegraphics[width=\textwidth]{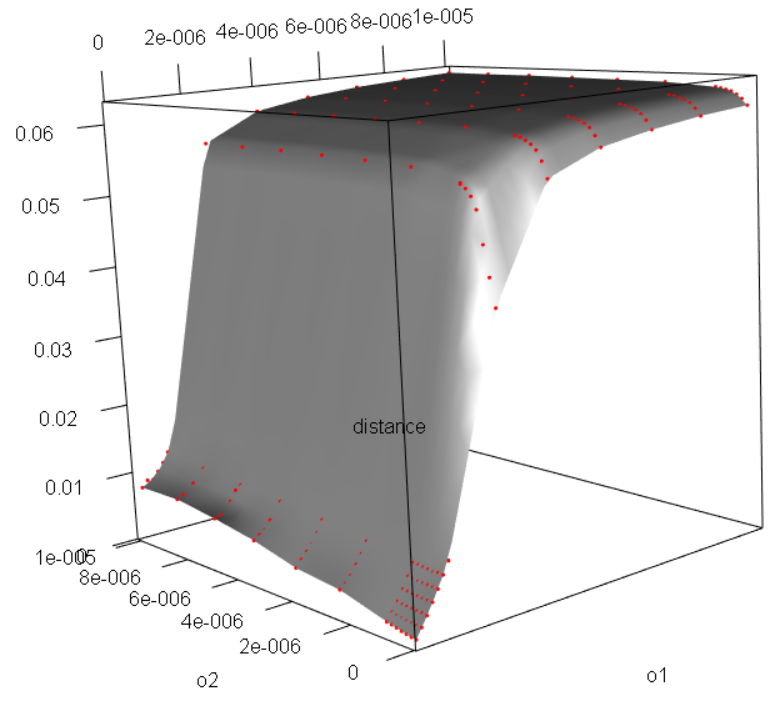}
            \caption{$S\&P 500$ FB in and view opposite of reality}
        \end{minipage}
\end{figure}

Again, just like before, we notice that, as $\omega_i$ get smaller and smaller, when taking into account the whole $S\&P 500$, the curve seems to be under and closer to $0$ than the one when taking into account only $4$ stocks. This might be because the prior on the covariance matrix containing the whole $S\&P 500$ has more information than the one which only has $4$ stocks. Moreover, for the same $q$, the curves have a similar orientation and general shape. Hence, this confirms the belief that albeit a small number of iterations was used for the Gibbs Sampler that takes into account the whole $S\&P 500$, the estimated posterior mean is still accurate. However, as mentioned before, the estimate for $\Sigma_{post}^{-1}$ when it's size is big is not accurate enough to have very reliable profit estimates. 

Nevertheless, for completeness of this analysis, we proceed by leaving all the inputs mentioned before unchanged and keeping $q=\begin{bmatrix}
0.02\\
0.05
\end{bmatrix}$. When taking into account the whole $S\&P 500$, the average profit over the before mentioned range of simulated pairs $(\omega_1,\omega_2)$ is $11,619.97\$$ with a standard deviation of $2,852.246\$$. In the next plot we can observe the profits obtained when considering just the $4$ stocks mentioned before.

\begin{figure}[ht]
	\centering
        \begin{minipage}[b]{0.6\linewidth}
            \centering
            \includegraphics[width=\textwidth]
{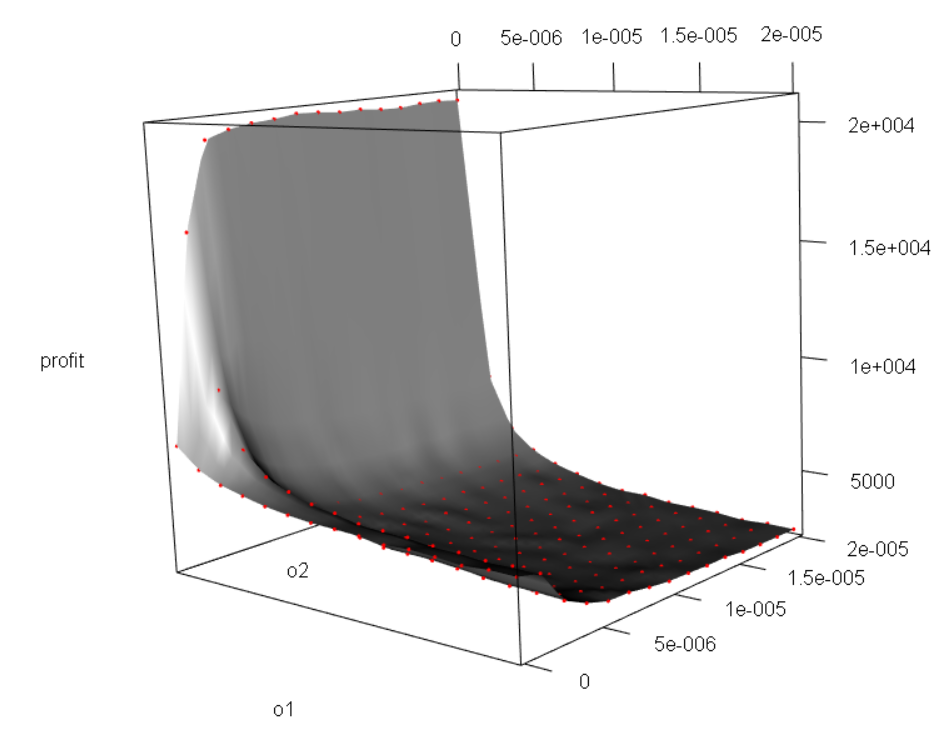}
            \caption{Profit $4$ stocks, FB in and $q={[0.02,0.05]}^T$}
        \end{minipage}
\end{figure}

From the figure, one can see that the first view has a higher influence on the profits than the second view. This is because if we let $\omega_2$ constant the resulting curve increases a lot faster than the curve obtained by keeping $\omega_1$ constant.

\newpage

\section{Prior on $log(\Sigma)$} \label{log(sigma) prior}
\subsection{Introduction}

Just like when introducing the approach with an Inverse Wishart prior, let us see what we would like to improve on it: 
\begin{itemize}
\item	Ideally, the matrix of views $P$ should not be augmented or changed in any way. It should be left just like the user inputted it. 
\item	One should try a different prior than the Inverse Wishart, which has been used numerous times. 
\end{itemize}

Therefore, two of the assumptions will be unchanged: 
\begin{gather*}
r_1,r_2,...,r_m|\mu,\Sigma \sim N_n(\mu,\Sigma) \\
P\mu \sim N_k(q,\Omega) \\
\end{gather*}

A very interesting idea for a different prior on the covariance matrix is presented by Leonard and Hsu (1992)\cite{hsu1992}. As the title of this section is hinting, this prior will actually be on $log(\Sigma)$. In order to better understand Leonard and Hsu's idea, let us look at the distribution: 

\begin{gather*}
f(r_1,...,r_m|\mu,\Sigma)=(2\pi)^{-\frac{mn}{2}}det(\Sigma)^{-\frac{m}{2}}exp \left \{-\frac{1}{2} \sum_{i=1}^{m} (r_i-\mu)^T\Sigma^{-1}(r_i-\mu)  \right \}
\end{gather*}

Let $A=log(\Sigma)$, ${\lambda_{A}}_i$ and ${\lambda_{\Sigma}}_i$ (with $i=\overline{1,n}$) be the eigenvalues of $A$ and $\Sigma$ respectively. Since $A=log(\Sigma)$ we obtain that ${\lambda_{A}}_i=log({\lambda_{\Sigma}}_i)\Rightarrow {\lambda_{\Sigma}}_i=e^{{\lambda_{A}}_i}$. Finally, by remembering that the determinant is the product of the eigenvalues and that the trace of a matrix is the sum of the eigenvalues, we notice that $det(\Sigma)=\prod_{i=1}^{n} {\lambda_{\Sigma}}_i=\prod_{i=1}^{n} e^{{\lambda_{A}}_i}=e^{Tr(A)}$. By using this in the joint distribution of the returns and by noticing that $(r_i-\mu)^T\Sigma^{-1}(r_i-\mu)\in \mathbb{R}$ we obtain: 

\begin{gather*}
f(r_1,...,r_m|\mu,\Sigma)=(2\pi)^{-\frac{mn}{2}} exp \left \{-\frac{1}{2} Tr \left( \sum_{i=1}^{m} (r_i-\mu)^T\Sigma^{-1}(r_i-\mu) \right)   \right \} \\
exp \left\{ -\frac{m}{2} Tr(A) \right\}= (2\pi)^{-\frac{mn}{2}} exp \left \{-\frac{1}{2} Tr \left( \sum_{i=1}^{m} (r_i-\mu)(r_i-\mu)^T\Sigma^{-1} \right)   \right \}\\
exp \left\{ -\frac{m}{2} Tr(A) \right\}=(2\pi)^{-\frac{mn}{2}} exp \left \{-\frac{m}{2} Tr \left(A+ Se^{-A} \right)   \right \}
\end{gather*}
Here, $S=\frac{1}{m}\sum_{i=1}^{m} (r_i-\mu)^T\Sigma^{-1}(r_i-\mu)$.
Before we continue, let us define an operator and make a few notations. 
\begin{definition}
Let $A$ be a $n\times n$ matrix, $A=(a_{ij})_{i,j=\overline{1,n}}$, then we define an operator that stacks in a vector the entries parallel to the main diagonal:
\begin{gather*}
Vec^*(A)=
\begin{bmatrix}
a_{11} & a_{22} & ... &a_{nn}|& a_{12} & a_{23} &...&a_{n-1n}|...|a_{1n}
\end{bmatrix}^T
\end{gather*}
\end{definition}
We notice that if $A$ is $n\times n$, $Vec^*(A)$ is $\frac{n(n+1)}{2}\times 1$. This definition brings us to the following notations:
\begin{notation} \label{notation 1}
\begin{gather*}
\lambda=Vec^*(log(S)), \alpha=Vec^*(log(\Sigma)) \\
\Lambda=log(S), A=log(\Sigma), d=\frac{n(n+1)}{2}
\end{gather*}
\end{notation}

The idea that Leonard and Hsu had was to approximate $f(r_1,...,r_m|\mu,\Sigma)$ by approximating $e^{-A}$. The approximation makes use of the fact that $X(\omega)=e^{-A\omega}$ satisfies a Volterra integral equation\cite{bellman1970}:
\begin{gather*}
X(t)={S}^{-t}-\int_{0}^{t}S^{s-t}(A-\Lambda)X(v)dv, 0<t<\infty, 
\end{gather*}
By letting $t=1$, by iterative substitution of $X(v)$ and by using the spectral decomposition of matrix $S$ we obtain that the approximation is (please see Appendix \ref{approx proof} for the proof): 

\begin{gather}  \label{approx dist}
f^*(r_1,...,r_m|\alpha)=(2\pi e)^{-\frac{mn}{2}}det(S)^{-\frac{m}{2}} exp \left\{ -\frac{1}{2} (\alpha-\lambda)^T Q (\alpha-\lambda) \right\}
\end{gather} 
In order to see how to compute $Q$, we first have to introduce a couple more notations. If we let $e_i,d_i$ to be the $i^{th}$ normalized eigenvector with its corresponding eigenvalue, respectively, then $f_{ij}$ is obtained by looking at the equation $Vec^*(log(\Sigma))^T f_{ij}=e_i^T log(\Sigma) e_j$ and identifying the coefficients of the entries in the $log(\Sigma)$ matrix. With those $f_{ij}$, we can finally compute $Q$: 
\begin{gather*}
\xi_{ij}=\frac{(d_i-d_j)^2}{d_id_j(log(d_i)-log(d_j))^2} \\
Q=\frac{m}{2}\sum_{i=1}^{n} f_{ii}f_{ii}^T+m\sum_{i<j}^{n} \xi_{ij}f_{ij}f_{ij}^T
\end{gather*}
\begin{remark}
The approximate distribution is: $\alpha|r_1,...,r_m \approx\sim N(\lambda,Q^{-1})$
\end{remark}
Now we are ready to move on to the next section and resent the assumptions of the model.

\subsection{The Model}
As mentioned in the previous section, we will have a prior on the $log(\Sigma)$. But how would one construct an intuitive distribution? The simplest distribution that one could work with is the multivariate normal, in which the variance terms on the main diagonal have a mean $\theta_1$ and a variance $\sigma_1^2$ and the covariance terms, which are on the off diagonal, have another mean $\theta_2$ and another variance $\sigma_2^2$. Hence, we arrive at the following model: 
\begin{equation} \label{ass2.1}
r_1,...,r_m|\mu,\Sigma\sim N(\mu,\Sigma)
\end{equation}
\begin{equation}\label{ass2.2}
P\mu\sim N(q,\Omega) 
\end{equation}
\begin{equation}\label{ass2.3}
\alpha|\theta,\Delta=Vec^*(log(\Sigma))|\theta,\Delta\sim N(J\theta,\Delta)
\end{equation} 
Where we have the following notations: 
\begin{notation}
\begin{gather*}
J=\begin{bmatrix}
1&0\\
:&:\\
1&0\\
0&1\\
:&:\\
0&1
\end{bmatrix},
\Delta=\begin{bmatrix}
\sigma_1^2 I_n & \mathbb{O} \\
\mathbb{O} & \sigma_2^2 I_{d-n}
\end{bmatrix}, 
\theta=\begin{bmatrix}
\theta_1\\
\theta_2
\end{bmatrix}
\end{gather*}
\end{notation}

\subsection{Derivation of Posterior Distributions}

If we let $\theta$ to have a uniform prior ($\theta\propto 1$) by integrating it out from the density in equation (\ref{ass2.3}), we obtain: 
\begin{proposition} \label{integral proposition}
\begin{gather*}
f(\alpha|\sigma_1^2,\sigma_2^2)=\int_\theta det(\Delta)^{-\frac{1}{2}}exp\left\lbrace -\frac{1}{2}(\alpha-J\theta)^T \Delta^{-1}(\alpha-J\theta) \right\rbrace d\theta=\\
=2\pi det(\Delta)^{-\frac{1}{2}}det(J^T\Delta^{-1}J)^{-\frac{1}{2}}exp \left\lbrace -\frac{1}{2}\alpha^TG\alpha  \right\rbrace \text{, where} \\
G=\left(I_d-J(J^T\Delta^{-1}J)^{-1}J^T\Delta^{-1}\right)^T\Delta^{-1}\left(I_d-J(J^T\Delta^{-1}J)^{-1}J^T\Delta^{-1}\right)
\end{gather*}
\end{proposition}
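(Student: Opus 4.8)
The plan is to treat this as a routine Gaussian marginalization. Up to the constant $\det(\Delta)^{-1/2}$ the integrand is a Gaussian kernel in the two‑dimensional variable $\theta$, so integrating $\theta$ out over $\mathbb{R}^2$ produces another Gaussian kernel in $\alpha$ whose normalizing constant acquires a factor $(2\pi)^{2/2}\det(J^T\Delta^{-1}J)^{-1/2}$ and whose exponent is the profiled quadratic form in $\alpha$. The only genuinely nontrivial point is verifying that this profiled form equals $\alpha^TG\alpha$ for the $G$ displayed in the statement.

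First I would expand the exponent as a quadratic in $\theta$:
\[
(\alpha-J\theta)^T\Delta^{-1}(\alpha-J\theta)=\theta^T\big(J^T\Delta^{-1}J\big)\theta-2\big(J^T\Delta^{-1}\alpha\big)^T\theta+\alpha^T\Delta^{-1}\alpha .
\]
Since $\Delta$ is positive definite and $J$ has full column rank $2$, the matrix $M:=J^T\Delta^{-1}J$ is $2\times2$ positive definite, hence invertible; this both justifies the next step and makes $G$ well defined. Applying the identity $x^TMx-2b^Tx=(x-M^{-1}b)^TM(x-M^{-1}b)-b^TM^{-1}b$ proved above with $x=\theta$ and $b=J^T\Delta^{-1}\alpha$ rewrites the exponent as $-\tfrac12\big[(\theta-M^{-1}b)^TM(\theta-M^{-1}b)+\alpha^T\Delta^{-1}\alpha-b^TM^{-1}b\big]$. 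The $\theta$‑dependent Gaussian integrates to $(2\pi)\det(M)^{-1/2}$, and pulling the $\theta$‑free factors out of the integral gives
\[
f(\alpha\mid\sigma_1^2,\sigma_2^2)=2\pi\,\det(\Delta)^{-1/2}\det\!\big(J^T\Delta^{-1}J\big)^{-1/2}\exp\Big\{-\tfrac12\,\alpha^T\big(\Delta^{-1}-\Delta^{-1}J(J^T\Delta^{-1}J)^{-1}J^T\Delta^{-1}\big)\alpha\Big\},
\]
where I used $b^TM^{-1}b=\alpha^T\Delta^{-1}J(J^T\Delta^{-1}J)^{-1}J^T\Delta^{-1}\alpha$.

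It remains to show that $\Delta^{-1}-\Delta^{-1}J(J^T\Delta^{-1}J)^{-1}J^T\Delta^{-1}$ equals the matrix $G$ in the statement. Writing $K:=J(J^T\Delta^{-1}J)^{-1}J^T\Delta^{-1}$, one has $G=(I_d-K)^T\Delta^{-1}(I_d-K)=\Delta^{-1}-K^T\Delta^{-1}-\Delta^{-1}K+K^T\Delta^{-1}K$. Using that $\Delta^{-1}$ and $(J^T\Delta^{-1}J)^{-1}$ are symmetric one checks $K^T\Delta^{-1}=\Delta^{-1}K=\Delta^{-1}J(J^T\Delta^{-1}J)^{-1}J^T\Delta^{-1}$, and $K^T\Delta^{-1}K=\Delta^{-1}J(J^T\Delta^{-1}J)^{-1}J^T\Delta^{-1}$ as well, because the inner factor $J^T\Delta^{-1}J$ cancels one copy of $(J^T\Delta^{-1}J)^{-1}$. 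Hence the three correction terms collapse to a single one, $G=\Delta^{-1}-\Delta^{-1}J(J^T\Delta^{-1}J)^{-1}J^T\Delta^{-1}$, which is exactly the quadratic form above, completing the proof.

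The main obstacle, such as it is, is this last algebraic collapse: recognizing that $K$ behaves like an oblique projection relative to the inner product induced by $\Delta^{-1}$ (so that $K^T\Delta^{-1}K=\Delta^{-1}K$ and the sandwiched form $(I_d-K)^T\Delta^{-1}(I_d-K)$ reduces to the single difference $\Delta^{-1}-\Delta^{-1}K$). Everything else — the expansion of the quadratic form, the completing‑the‑square step, the value of the Gaussian normalizing constant, and the invertibility of $J^T\Delta^{-1}J$ — is routine bookkeeping.
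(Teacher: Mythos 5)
Your proof is correct, but it takes a genuinely different route from the paper's. You carry out the marginalization entirely in matrix form: complete the square in the vector $\theta\in\mathbb{R}^2$ via the identity $x^TMx-2b^Tx=(x-M^{-1}b)^TM(x-M^{-1}b)-b^TM^{-1}b$ with $M=J^T\Delta^{-1}J$ and $b=J^T\Delta^{-1}\alpha$, read off the Gaussian normalizing constant $(2\pi)\det(M)^{-1/2}$, and then show by the oblique-projection collapse that the residual quadratic form $\Delta^{-1}-\Delta^{-1}J(J^T\Delta^{-1}J)^{-1}J^T\Delta^{-1}$ coincides with the sandwiched $G$ of the statement. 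The paper instead exploits the explicit block structure of $J$ and $\Delta$ to rewrite everything in scalar form (its Lemmas \ref{determinants identity} and \ref{exponential identity}), factors the integral into two one-dimensional integrals over $\theta_1$ and $\theta_2$, and evaluates each by the scalar decomposition $\sum_i(\alpha_i-\theta_1)^2=\sum_i(\alpha_i-\overline{\alpha_v})^2+n(\overline{\alpha_v}-\theta_1)^2$ followed by a change of variables. Your argument is shorter and strictly more general — it needs only that $J$ has full column rank and $\Delta$ is positive definite, not the particular two-block form — whereas the paper's scalar route has the side benefit of producing the explicit sums $\frac{1}{\sigma_1^2}\sum_{i=1}^n(\alpha_i-\overline{\alpha_v})^2$ and $\frac{1}{\sigma_2^2}\sum_{i=n+1}^d(\alpha_i-\overline{\alpha_c})^2$ that are reused immediately afterwards to identify the Inverse Gamma posteriors of $\sigma_1^2$ and $\sigma_2^2$. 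Note that both proofs hinge on the same algebraic simplification of $G$ to the single difference $\Delta^{-1}-\Delta^{-1}J(J^T\Delta^{-1}J)^{-1}J^T\Delta^{-1}$; the paper performs exactly this collapse at the start of its proof of Lemma \ref{exponential identity}.
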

For the proof, please see the Appendix \ref{integral proof}.

Now, by using this distribution together with the approximation obtained from the Volterra integral of the distribution of returns denoted by equation (\ref{approx dist}) and with the prior on $P\mu$ represented by equations (\ref{ass2.2}), we can finally obtain the approximate joint distribution: 
\begin{equation} \label{joint dist for log}
\begin{aligned} 
f(\alpha,\mu,\sigma_1^2,\sigma_2^2,r_1,...,r_m)\approx\propto det(\Delta)^{-\frac{1}{2}}det(J^T\Delta^{-1}J)^{-\frac{1}{2}}exp \left\lbrace -\frac{1}{2}\alpha^TG\alpha  \right\rbrace \cdot\\
\cdot det(S)^{-\frac{m}{2}}exp \left\{ -\frac{1}{2} (\alpha-\lambda)^T Q (\alpha-\lambda) \right\}\cdot\\
\cdot det(\Omega)^{-\frac{1}{2}} exp \left\{ -\frac{1}{2} (P\mu-q)^T\Omega^{-1} (P\mu-q)\right\}
\end{aligned}
\end{equation}

We will first proceed with finding the posterior of $\alpha$. Hence, we have to collect all the terms depending on $\alpha$. Since one of those is the approximation obtained from the Volterra integral, the posterior is going to be an approximate distribution: 
\begin{gather*}
f^*(\alpha|r_1,...,r_m,\sigma_1^2,\sigma_2^2,\mu)\approx\propto exp \left\lbrace -\frac{1}{2}\left(\alpha^TG\alpha + (\alpha-\lambda)^T Q (\alpha-\lambda) \right) \right\rbrace
\end{gather*}
We can apply \textbf{Lemma \ref{lemma2} (Completing the square)} with $y=\alpha, a=0,A=G,b=\lambda,B=Q$ and we obtain that:

\begin{gather*}
\boxed{
\alpha|r_1,...,r_m,\sigma_1^2,\sigma_2^2,\mu\approx\sim N(\alpha^*,(Q+G)^{-1}) \text{, where } \alpha^*=(Q+G)^{-1}Q\lambda}
\end{gather*}

Moving to the posterior of $\sigma_1^2,\sigma_2^2$, we have to collect the terms depending on $\Delta$, which also includes $G$. We note that the term obtained from the Volterra integral approximation of the matrix exponential does not show up in this posterior. Hence, this will be an exact distribution: 
\begin{gather*}
f(\sigma_1^2,\sigma_2^2|\alpha,\mu,r_1,...,r_m)\propto det(\Delta)^{-\frac{1}{2}}det(J^T\Delta^{-1}J)^{-\frac{1}{2}}exp \left\lbrace -\frac{1}{2}\alpha^TG\alpha  \right\rbrace
\end{gather*} 

However, one can write the above distribution in scalar form. By applying \textbf{Lemma \ref{determinants identity}} which can be found in \textbf{Appendix \ref{integral proof}}, one finds that the joint posterior distribution of $\sigma_1^2,\sigma_2^2$ is equal to: 
\begin{gather*}
f(\sigma_1^2,\sigma_2^2|\alpha,\mu,r_1,...,r_m)\propto \left( \sigma_1^2 \right)^{-\frac{n-1}{2}}\left( \sigma_2^2 \right)^{-\frac{d-n-1}{2}} exp \left\lbrace -\frac{1}{2}\alpha^TG\alpha  \right\rbrace
\end{gather*}

Furthermore, by applying \textbf{Lemma \ref{exponential identity}} which can also be found in \textbf{Appendix \ref{integral proof}}, we obtain that the scalar version for the equation is: 
\begin{gather*}
f(\sigma_1^2,\sigma_2^2|\alpha,\mu,r_1,...,r_m)\propto \left( \sigma_1^2 \right)^{-\frac{n-1}{2}} exp \left\lbrace - \frac{1}{2\sigma_1^2}\sum_{i=1}^{n} (\alpha_i-\overline{\alpha_v})^2 \right\rbrace \\ 
\left( \sigma_2^2 \right)^{-\frac{d-n-1}{2}}exp \left\lbrace-\frac{1}{2\sigma_2^2}\sum_{i=n+1}^{d} (\alpha_i-\overline{\alpha_c})^2 \right\rbrace
\end{gather*}
Here, $\overline{\alpha_v}$ are the averages of the log of the variance terms and $\overline{\alpha_c}$ are the averages of the log of the covariance terms: 
\begin{gather*}
\overline{\alpha_v}=\frac{\sum_{i=1}^n\alpha_i}{n} \text{ and }  
\overline{\alpha_c}=\frac{\sum_{i=n+1}^d\alpha_i}{d-n}
\end{gather*}

Hence, both posteriors of $\sigma_1^{2}$ and $\sigma_2^{2}$ are following Inverse Gamma distributions and they are independent: 
\begin{gather*}
\boxed{
\sigma_1^{2}|\alpha,\mu,r_1,...,r_m\sim \Gamma^{-1} \left( \frac{n-3}{2},\frac{1}{2}\sum_{i=1}^{n} (\alpha_i-\overline{\alpha_v})^2 \right)}\\
\boxed{
\sigma_2^{2}|\alpha,\mu,r_1,...,r_m\sim \Gamma^{-1} \left( \frac{d-n-3}{2},\frac{1}{2}\sum_{i=n+1}^{d} (\alpha_i-\overline{\alpha_c})^2 \right)}
\end{gather*}

We are finally ready to compute the posterior for $\mu$ also by collecting the terms that depend on it. We notice that the term obtained from the Volterra integral approximation of the matrix exponential does not show up in the posterior. Therefore, like the posteriors of $\sigma_1^2$ and $\sigma_2^2$, this will be an exact distribution. Moreover, we notice that the first two equations in the assumptions of our model (equations (\ref{ass2.1}) and (\ref{ass2.2})) are the same as when we used an Inverse Wishart prior. Therefore, the derivation for the posterior for $\mu$ will be the same, yielding: 

\begin{gather*}
\boxed{
\mu|\alpha,\sigma_1^2,\sigma_2^2,r_1,...,r_m \sim N\left(\mu_{post}, \Sigma_{post} \right)} ,\text{ where } \\
\boxed{
\mu_{post}=(m\Sigma^{-1}+P^T\Omega^{-1}P)^{-1}(m\Sigma^{-1}\overline{r}+P^T\Omega^{-1}q)} \\
\boxed{
\Sigma_{post}=\left( m\Sigma^{-1}+P^T\Omega^{-1}P \right)^{-1}
}
\end{gather*}

\subsection{Implementation}

Now that we have derived our posteriors, we are ready to implement it, using a Gibbs Sampler. The only difference from before is that we will use a Metropolis-Hastings algorithm for sampling $\alpha$, for which we need the exact posterior distribution. This will be proportional to the distribution obtained from collecting all terms with an $\alpha$ from the joint distribution represented by equation (\ref{joint dist for log}): 

\begin{gather*}
exp \left\lbrace -\frac{1}{2}\alpha^TG\alpha  \right\rbrace
\boxed{det(S)^{-\frac{m}{2}}exp \left\{ -\frac{1}{2} (\alpha-\lambda)^T Q (\alpha-\lambda) \right\}}
\end{gather*}

We have seen that it results in the posterior: 
\begin{gather*}
\alpha|r_1,...,r_m,\sigma_1^2,\sigma_2^2,\mu\approx\sim N(\alpha^*,(Q+G)^{-1}) \text{, where } \alpha^*=(Q+G)^{-1}Q\lambda \\
\pi^*(\alpha|r_1,...,r_m,\sigma_1^2,\sigma_2^2,\mu)\approx\propto exp \left \lbrace -\frac{1}{2}(\alpha-\alpha^*)^T(Q+G)(\alpha-\alpha^*) \right \rbrace
\end{gather*}

This is an approximation since the boxed part is an approximation of the pdf of a multivariate normal using the Volterra integral equation. If we replace this with the exact distribution, we would obtain: 
\begin{gather*}
\pi(\alpha|r_1,...,r_m,\sigma_1^2,\sigma_2^2)\propto exp\left\lbrace -\frac{m}{2} Trace\left(A+Se^{-A}\right)-\frac{1}{2} \alpha^TG\alpha  \right\rbrace
\end{gather*}

The Metropolis-Hastings step at $t^{th}$ iteration would be that we would simulate a candidate value from the approximate posterior distribution: $\widetilde{\alpha}\approx\sim N(\alpha^*,(Q+G)^{-1})$ and we would accept it with probability $min(\rho,1)$, where
\begin{gather*}
\rho=\frac{\pi\left(\widetilde{\alpha}|r_1,...,r_m,{\sigma_1^2}^{(t)},{\sigma_2^2}^{(t)},{\mu}^{(t)}\right)}{\pi\left({\alpha}^{(t)}|r_1,...,r_m,{\sigma_1^2}^{(t)},{\sigma_2^2}^{(t)},{\mu}^{(t)}\right)}\cdot \frac{\pi^*\left({\alpha}^{(t)}|r_1,...,r_m,{\sigma_1^2}^{(t)},{\sigma_2^2}^{(t)},{\mu}^{(t)}\right)}{\pi^*\left(\widetilde{\alpha}|r_1,...,r_m,{\sigma_1^2}^{(t)},{\sigma_2^2}^{(t)},{\mu}^{(t)}\right)}
\end{gather*}

It is useful at this point to remember that because of the notation introduced in \ref{notation 1}, we have a connection between $\pi^*$ and $\pi$ since there is one between $A$ and $\alpha$, namely: 
\begin{gather*}
\alpha=Vec^*(A)
\end{gather*}

Using the Metropolis Hastings step that was just discussed, we arrive at the following Gibbs Sampler: 

\begin{algorithm}
\caption{Gibbs Sampler $log(\Sigma)$}
\begin{algorithmic}[1]
\scriptsize 
\STATE $\alpha^{(t+1)}=
\begin{cases}
\widetilde{\alpha}\sim N\left( \left( Q^{(t)}+G^{(t)} \right)^{-1}Q^{(t)}\lambda^{(t)}, \left( Q^{(t)}+G^{(t)} \right)^{-1}\right) \text{w.p. } min(\rho,1) \\
\alpha^{(t)} \text{otherwise}
\end{cases}$
\STATE  Since $\alpha=Vec^*(log(\Sigma))\Rightarrow \begin{cases}
\text{compute } \Sigma^{(t+1)}=exp \left\lbrace {Vec^*}^{-1}\left( \alpha^{(t+1)} \right) \right\rbrace \\
\text{keep } \Sigma^{(t)}
\end{cases}$
\STATE	$\begin{cases}
{\sigma_1^2}^{(t+1)}\sim \Gamma^{-1} \left( \frac{n-3}{2},\frac{1}{2} \sum_{i=1}^n \left( {\alpha_i}^{(t+1)}-{\overline{{\alpha_v}^{(t+1)}}} 
\right)^2 \right) \\
{\sigma_2^2}^{(t+1)}\sim \Gamma^{-1} \left( \frac{d-n-3}{2},\frac{1}{2} \sum_{i=n+1}^d \left( {\alpha_i}^{(t+1)}-{\overline{{\alpha_c}^{(t+1)}}} \right)^2 \right)
\end{cases}\Rightarrow$

$\Rightarrow \Delta^{(t+1)}=
\begin{bmatrix}
{\sigma_1^2}^{(t+1)}I_n & \mathbb{O} \\
\mathbb{O} & {\sigma_2^2}^{(t+1)}I_{d-n}\\
\end{bmatrix}$
\STATE	Let $\Sigma_{\mu}=\left( m{\Sigma^{(t+1)}}^{-1}+P^T\Omega^{-1}P \right)^{-1}\Rightarrow \mu^{(t+1)}\sim N  \left( \Sigma_{\mu}\left(  m{\Sigma^{(t+1)}}^{-1}\overline{r}+P^T\Omega^{-1}q \right), \Sigma_{\mu}\right)$
\STATE Compute $S^{(t+1)}=\frac{\sum_{i=1}^m \left( r_i-\mu^{(t+1)} \right)\left( r_i-\mu^{(t+1)} \right)^T}{m}$, $\lambda^{(t+1)}=Vec^* \left( log \left( S^{(t+1)} \right) \right)$, ${d_j}^{(t+1)}$ and ${e_j}^{(t+1)}$ the eigenvalue and normalized eigenvector of $S^{(t+1)}$ respectively. 
\STATE	Compute $f_{ij}^{(t+1)}$ by identifying the coefficients of the entries of the $log\left(\Sigma \right)$ matrix from the equation $Vec^*\left( log\left( \Sigma^{(t)}\right) \right){f_{ij}}^{(t+1)}={e_i}^{(t+1)}log \left( \Sigma^{(t)} \right){e_j}^{(t+1)} $
\STATE Compute $\xi_{ij}^{(t+1)}=\frac{({d_i}^{(t+1)}-{d_j}^{(t+1)})^2}{{d_i}^{(t+1)}{d_j}^{(t+1)}\left(log\left( {d_i}^{(t+1)} \right)-log({d_j}^{(t+1)})\right)^2}$
\STATE Compute $Q^{(t+1)}=\frac{m}{2}\sum_{i=1}^{n} {f_{ii}}^{(t+1)}{{f_{ii}}^{(t+1)}}^T+m\sum_{i<j}^{n} {\xi_{ij}}^{(t+1)}{f_{ij}}^{(t+1)}{{f_{ij}}^{(t+1)}}^T$
\STATE	Compute 
\begin{gather*}
G^{(t+1)}=\left(I_d-J(J^T{\Delta^{(t+1)}}^{-1}J)^{-1}J^T{\Delta^{(t+1)}}^{-1}\right)^T{\Delta^{(t+1)}}^{-1} \cdot \\
\cdot \left(I_d-J(J^T{\Delta^{(t+1)}}^{-1}J)^{-1}J^T{\Delta^{(t+1)}}^{-1}\right)
\end{gather*}
\end{algorithmic}
\end{algorithm}

\subsection{Results}

Just like we did before, in this section we will depict the sensitivity of the model to changes in confidence levels ($\omega_i$) in terms of both the distance of the posterior to investor's view and the profits obtained if one would use this model to trade.

Before we delve into the actual results for this version of the model, we notice that remarks (\ref{remark 1}) and (\ref{remark 2}) both hold. Basically, this means that as the diagonal entries in $\Omega$ get smaller, the more confident we are in the views because we have the assumption that $P\mu\sim N(q,\Omega)$. Same assumption points out the fact that the smaller $\Omega$ is, the closer $P\mu$ should be to $q$. Hence, a very small $\Omega$ shows the fact that the investor is very confident in this view and, therefore, the posterior should also be close to $q$. Therefore, the smaller our $\Omega$ is, the closer $P\mu_{\text{post}}$ should be to $q$. In the first part of this section we will present some plots similar to the ones presented before. We will take 2 views and do an exhaustive search over possible combinations of pairs of values for the 2 diagonal entries of $\Omega$ (which are depicted as $2$ axis) and compute the same distance as before:$|P\mu_{\text{post}}-q|$ (which is depicted as $1$ axis). 

We chose the same $4$ stocks (AAPL, AMZN, GOOG, MSFT), and we will use the same data set as when we presented the results in section \ref{results P nonsqr Inv Wish}: daily returns from 1/2/2014 to 12/29/2017. We will use the following inputs (again the columns are in order AAPL, AMZN, GOOG, MSFT and the rows represent the views):
\begin{gather*}
q=\begin{bmatrix}
0.02\\
0.05
\end{bmatrix},
P=\begin{bmatrix}
-1 & 1 & 0 & 0\\
0 & 0 & 1 & -1\\
\end{bmatrix}
\end{gather*}

Just like when we had a $P$ non-square and an Inverse Wishart prior,  in this version of the model, one can use smaller confidence levels than when we were just using an Inverse Wishart prior and the augmented matrix $P$. This time one can choose $\omega_i$ (which were defined as the entries in the main diagonal of $\Omega$) of the order $10^{-7}$ without getting any numerical issues. For the results presented here, we let $(\omega_1,\omega_2)$ range between $10^{-6}$ to $10^{-4}$. 

However, one can imagine that this approach is more computationally expensive than just having an Inverse Wishart prior on $\Sigma$. Therefore, the exhaustive search was run in parallel on multiple cores (each core running the Gibbs Sampler for $1$ pair $(\omega_1,\omega_2)$) and the range itself was split into 4 ranges: 
\begin{gather*}
(\omega_1,\omega_2)\in \begin{cases}
\left( 10^{-6}\rightarrow 10^{-5},10^{-6}\rightarrow 10^{-5} \right)\\
\left( 10^{-5}\rightarrow 10^{-4},10^{-5}\rightarrow 10^{-4} \right)\\
\left( 10^{-6}\rightarrow 10^{-5},10^{-5}\rightarrow 10^{-4} \right)\\
\left( 10^{-5}\rightarrow 10^{-4},10^{-6}\rightarrow 10^{-5} \right)
\end{cases}
\end{gather*}
Each one of those ranges was split into an equally spaced grid of $4^2$ points, each one of those being ran on $1$ core.

The burn period was set to $10^{3}$ and the iterations to $10^{4}$. Albeit those seem relatively small, convergence is actually achieved very fast when $\omega_i$ are small.  

\begin{figure}[ht]
	\centering
        \begin{minipage}[b]{0.6\linewidth}
            \centering
            \includegraphics[width=\textwidth]
{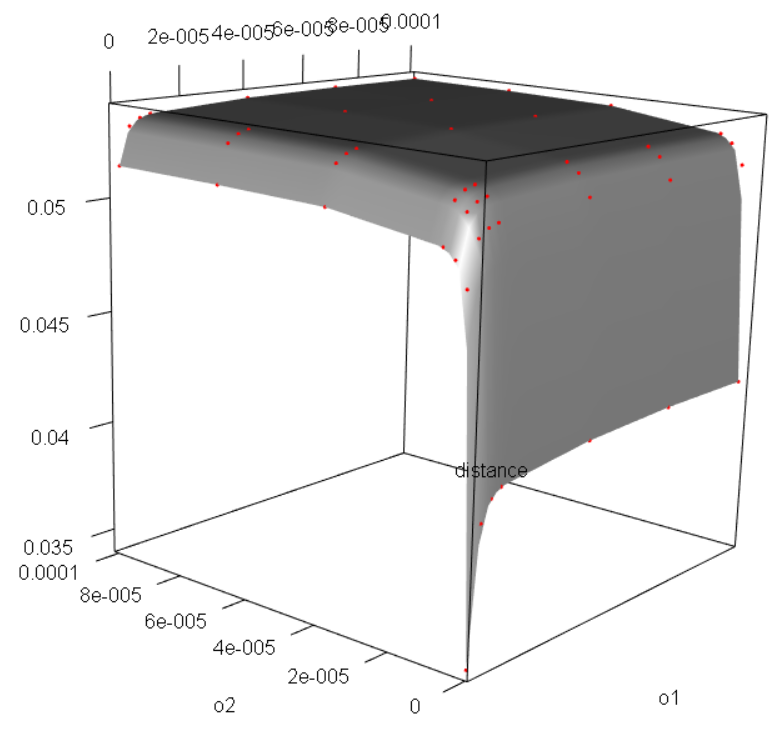}
            \caption{Distances for $log(\Sigma)$ prior}
        \end{minipage}
\end{figure}

We notice that in this version of the model, the distance converges to $0$ very fast as o1 ($\omega_1$ in the model) and o2 ($\omega_2$ in the model) go to $0$. Also, we notice that as o1 and o2 get bigger, it converges very fast to a stabilizing distance. This is consistent with our intuition since if we are very confident in our views, the model should put a lot more importance on them, while if we are not confident at all in our views, the model should just take into consideration the history. Indeed, if we use only the history, the unbiased estimator for $\mu$ is the sample mean of the returns ($\overline{r}$) and therefore the distance becomes $|P\overline{r}-q|=0.05388875$.

We also notice that the second view (corresponding to $o2$) has more influence on the posterior than the first view. This is because the $3D$ curve would leave a $2D$ line on a section parallel to the "o2 vs distance" plane that converges to $0$ as o2 gets very small much faster than a section parallel to the "o1 vs distance" plane would when o1 gets very small.

We will proceed by looking at profits (losses) that we would obtain by using this model trained on the same daily returns between 1/2/2014 and 12/29/2017. We would estimate using Gibbs Sampling the posterior mean ($\mu_{post}$) and the posterior covariance ($\Sigma_{post}$) and we use the CAPM equation (\ref{eq capm}) to obtain the weights to be $w=\frac{1}{2.5}{\Sigma_{post}}^{-1}\mu_{post}$. With those weights we compute the profits that we would obtain over the month January 2018 (just like before, daily returns between 1/2/2018 and 1/30/2018) with an initial investment of $100,000\$$. Here, one could use a different investment horizon also.

The same $P$, $q$, grid for $\omega_i$, burn period, iteration period were used as before. The following is a $3D$ plot of the sensitivity of the profits to changes in confidence: 

\begin{figure}[ht]
	\centering
        \begin{minipage}[b]{0.6\linewidth}
            \centering
            \includegraphics[width=\textwidth]
{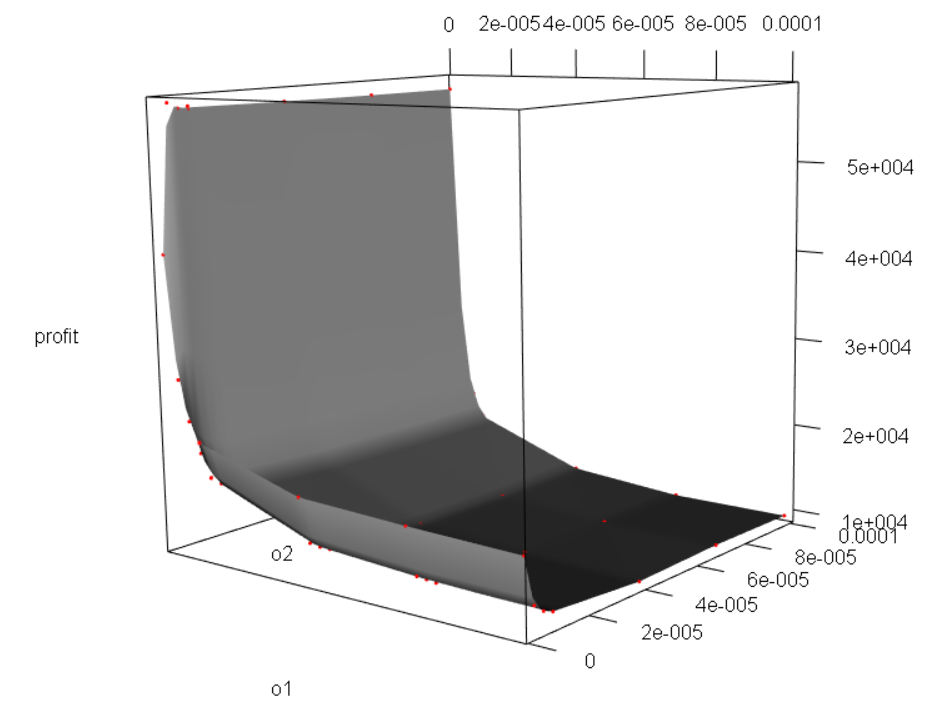}
            \caption{Profits with AMZN in and $q={[0.02,0.05]}^T$}
            \label{fig:figure 6}
        \end{minipage}
\end{figure}

We observe a profit that is approximately between $10,000\$$ and $58,000\$$. In order to interpret this curve, we would have to know what actually happened in the month of January 2018 using the views inputted. More specifically, over the month of January 2018, $Pr_{\text{Jan2018}}=\begin{bmatrix}
0.23996743\\
0.01366718
\end{bmatrix}$. Albeit the inputted $1^{st}$ view is a $10^{th}$ of what happened in reality (AMZN outperformed AAPL by almost $24\%$ in January $2018$), the model puts a higher importance on it than on the $2^{nd}$ view. Indeed, the profits increase drastically as we decrease $\omega_1$ and keep $\omega_2$ constant. Profits do not increase much as we decrease $\omega_2$ and keep $\omega_1$ constant.

Just like we did before, since a $24\%$ gain on AAPL in a month is an extreme scenario, let us consider a different stock instead of AMZN. We will replace AMZN with FB (Facebook) and we will keep all the inputs the same as before, except that we will vary $q$. In the following $3$ figures we will present the results for profits when the investor considers $q=\begin{bmatrix}
0.02\\
0.05
\end{bmatrix}$, $q=\begin{bmatrix}
0.06212815\\
0.01366718
\end{bmatrix}$ which is exactly what happened during the month of January 2018 (the "well informed" investor) and $q=\begin{bmatrix}
-0.06212815\\
-0.01366718
\end{bmatrix}$ which is exactly the opposite of what happened during the month of January 2018 (the "poorly informed" investor):

\begin{figure}[ht]
        \begin{minipage}[b]{0.29\linewidth}
            \centering
            \includegraphics[width=\textwidth]
{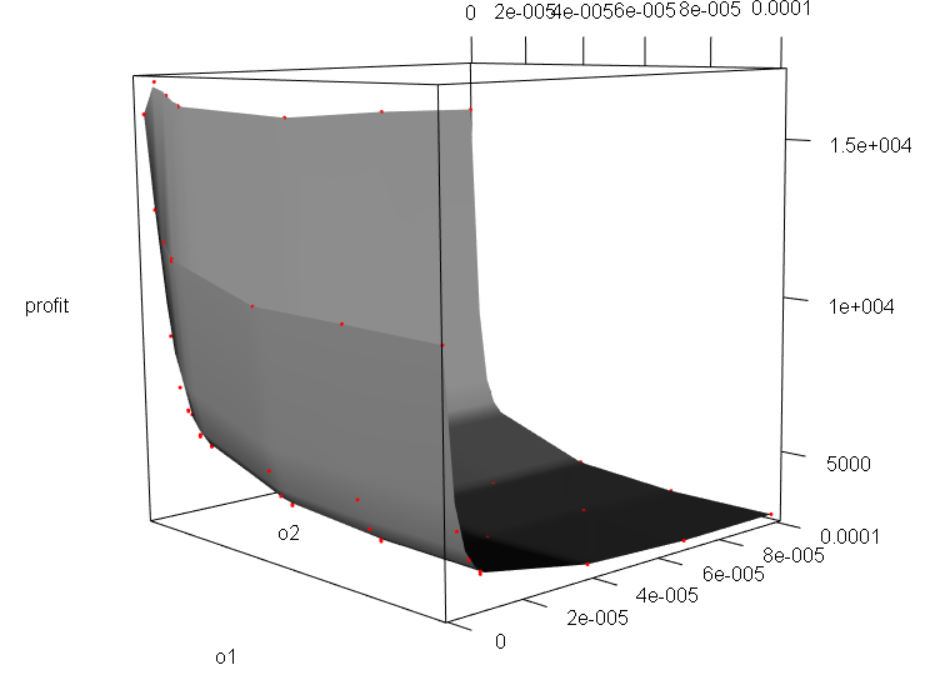}
            \caption{Profits FB instead of AMZN and $q={[0.02,0.05]}^T$}
            \label{fig:figure 7}
        \end{minipage}
        \hspace{0.5cm}
        \begin{minipage}[b]{0.29\linewidth}
            \centering
            \includegraphics[width=\textwidth]{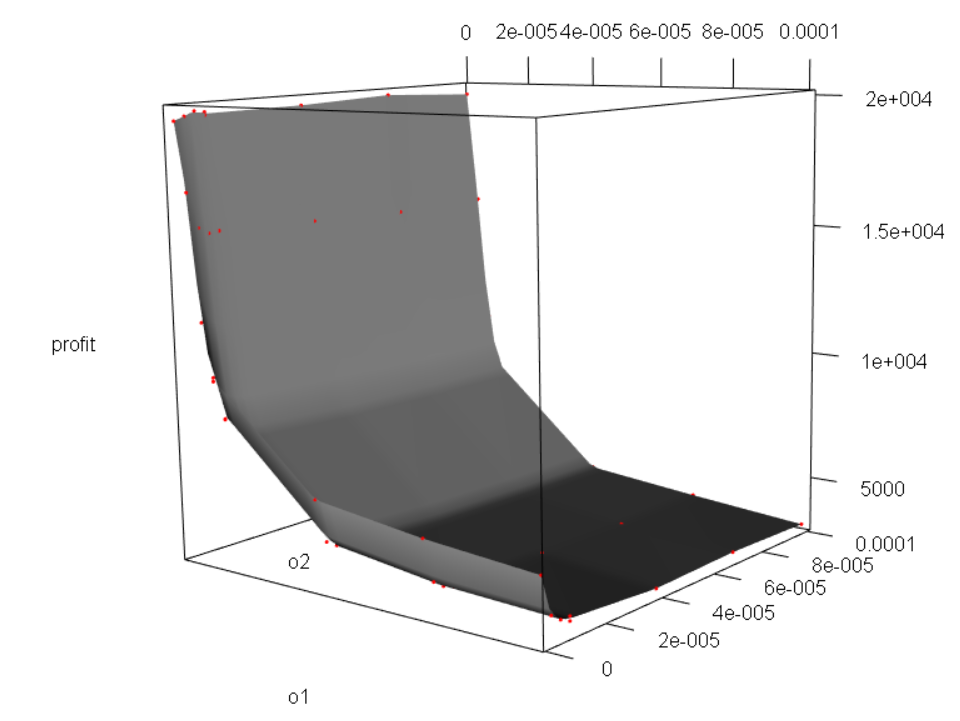}
            \caption{Profits FB instead of AMZN and view exactly like reality}
            \label{fig:figure 8}
        \end{minipage}
        \hspace{0.5cm}
        \begin{minipage}[b]{0.29\linewidth}
            \centering
            \includegraphics[width=\textwidth]{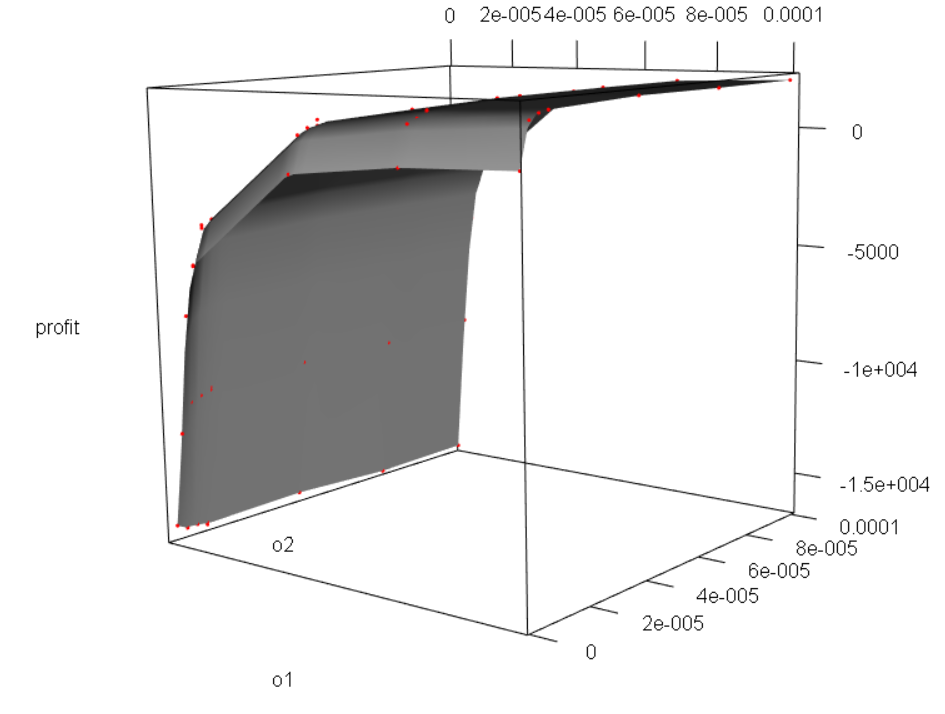}
            \caption{Profits FB instead of AMZN and view opposite of reality}
            \label{fig:figure 9}
        \end{minipage}
\end{figure}

\begin{itemize}
\item	Since $Pr_{\text{Jan2018}}=\begin{bmatrix}
0.06212815\\
0.01366718
\end{bmatrix}$, the view in which $q=\begin{bmatrix}
0.02\\
0.05
\end{bmatrix}$ has returns that are much closer to what happened in reality than when we had AMZN instead of FB (especially the first view is closer). We notice that the second view has a greater influence on the profits than what we have seen in figure \ref{fig:figure 6} and this can be clearly noticed in figure \ref{fig:figure 7} from above.
\item	If the investor has a view exactly like the reality (figure \ref{fig:figure 8}), the first view has more influence on the profits as $\omega_1$ gets smaller and smaller.
\item	Moreover, if we compare figures \ref{fig:figure 8} and \ref{fig:figure 9}, we notice that they seem to be a reflection of each other with respect to a plane parallel to the "o1 vs o2" plane. This would make sense since the only difference between the two is that in figure \ref{fig:figure 8} we have a $q=\begin{bmatrix}
0.06212815\\
0.01366718
\end{bmatrix}$ and in figure \ref{fig:figure 9} we have a $q=-\begin{bmatrix}
0.06212815\\
0.01366718
\end{bmatrix}$. 
\end{itemize}

\subsection{Limitations}

In the previous section, we haven't presented any results for the whole $S\&P500$. This is because we have encountered both memory allocation and running time problems. Both arise from the size of the matrices which makes all matrix computations and sampling from multivariate distributions time consuming. The biggest issue is with the construction of the matrix $Q$. We remind ourselves that we have to compute $f_{ij}$ by looking at the equation $Vec^*(log(\Sigma))^T f_{ij}=e_i^T log(\Sigma) e_j$ and identifying the coefficients of the entries in the $log(\Sigma)$ matrix. With those $f_{ij}$, we can finally compute $Q$: 
\begin{gather*}
\xi_{ij}=\frac{(d_i-d_j)^2}{d_id_j(log(d_i)-log(d_j))^2} \\
Q=\frac{m}{2}\sum_{i=1}^{n} f_{ii}f_{ii}^T+m\sum_{i<j}^{n} \xi_{ij}f_{ij}f_{ij}^T
\end{gather*}

It is easy to compute $\xi_{ij}$ and the elegant way to compute the $f's$ is by coding a $4$ way tensor and applying the function $Vec^*(\cdot)$ to $2$ of its entries (one can see the pattern more easily by taking a small dimensional example). However, this is not the fastest way since one can actually fill out each entry in $Q$ directly. In both situations, the dimensionality problem still exists. When we take into consideration the whole $S\&P 500$, the number of rows and columns are of size $d=\frac{500\cdot 501}{2}$, but since $Q$ is symmetric we would have to store a little more than half of the entries in $Q$ (albeit this approach makes all the formulas in the posterior a lot messier). Even so, the size of such an object is approximately $53$ GB. Even with the biggest server at $UCSB$, for which a node has $1$ TB of RAM memory, we could only run this in parallel on at most $20$ cores. 

The memory allocation problem combined with a running time that is a lot bigger than just the $4$ hours that took to run the simulations presented in section \ref{results P nonsqr Inv Wish} makes this approach computationally not feasible for a large data set.

We have looked at a couple of ideas to remedy the problem: 

\begin{itemize}
\item	Writing the matrix $Q$ to the disk. Unfortunately, one would need a high speed connection (for example SSD) to be able to write it fast enough that it doesn't make the running time even longer. This is of paramount importance since we have to compute $Q$ at each iteration of the Gibbs Sampler.
\item	We have looked at parallelizing the Gibbs Sampler itself (which is a Markov Chain). More precisely, in the general setting of Markov Chains, we have looked at independently starting at $m$ initial points and, from each initial point, starting independent Markov Chains. It has been shown\cite{baum1962} that for one single Markov Chain that satisfies Doob's conditions, the ergodic average converges geometrically: 
\begin{gather*}
P\left(\frac{1}{n} \sum_{k=1}^n f(X_k)>\epsilon\Big|X_0=x_0 \right)\leq A(\epsilon)\rho(\epsilon)^n \text{ ,where}\\
(\exists) d_0,t_0 \text{ s.t. } \rho(\epsilon)=\Phi(d_0,t_0)^{\frac{1}{d_0}}+\eta \text{ with } \eta \text{ s.t. } \rho(\epsilon)<1, \\ \Phi(d_0,t_0)=\sup_{x_0} E \left[ e^{t_0{\sum_{k=1}^{d_0}}f(X_k)}\Big|x_0 \right]
\end{gather*} 
By using this result, one can easily show that for running $m$ Markov Chains in parallel we obtain the following bound: 
\begin{gather*}
P \left( \frac{1}{m} \sum_{i=1}^m \frac{1}{n} \sum_{k=1}^n f(X_{ik})> \epsilon \Big|x_0 \right)\leq e^{-{t_0}^* mn\epsilon}{A^*(\epsilon)}^m{\rho^*(\epsilon)}^{mn}
\end{gather*}
Here, the existence of ${d_0}^*,{t_0}^*$ and the definitions of $A^*(\cdot),\rho^*(\cdot)$ are in the same way as before. The problem is that we cannot compare the right hand sides of the $2$ inequalities from above because the $A(\cdot), A^*(\cdot)$ and $\rho(\cdot), \rho^*(\cdot)$ are different since this is a proof of existence.
\end{itemize}

\subsection{Current Work}

The running time and memory allocation problems encountered when using the whole market would suggest that one has to reduce the dimensionality. Moreover, there is a strong connection between the original Black-Litterman model and CAPM (which can be seen as a factor analysis model in statistics). This gave us the idea of adding a fully Bayesian specified factor model to the Bayesian alternatives presented in this paper. All the posteriors have already been derived for those.

\newpage
\appendix

\section{\\Proof of Original Approach} \label{proof traditional}

As we have seen, the original model is represented by the following $3$ distributions, where the last 2 are priors: 
\begin{gather*}
r \sim N(\mu,\Sigma) \\
\mu \sim N(\pi,\tau \Sigma) \\
q|\mu \sim N(P\mu,\Omega)
\end{gather*}

The last 2 equations are combined. In order to get the joint likelihood , we would have to multiply the probability density functions of the last $2$ distributions: 

\begin{align*}
f(\mu,q) &\propto exp \left \{ -\frac{1}{2} \left[  (\mu-\pi)^T (\tau \Sigma)^{-1}(\mu -\pi)+(q-P\mu)^T\Omega^{-1}(q-P\mu)\right]  \right \}=\\
&=exp\left\{-\frac{1}{2} 
\begin{bmatrix}
\mu-\pi \\
q-P\mu  
\end{bmatrix}^T
\begin{bmatrix}
\tau\Sigma & 0\\
0 & \Omega
\end{bmatrix}^{-1}
\begin{bmatrix}
\mu-\pi \\
q-P\mu  
\end{bmatrix}
\right\}
\end{align*}

Let $V=\begin{bmatrix}
\tau\Sigma & 0\\
0 & \Omega
\end{bmatrix}$ and $\alpha=\begin{bmatrix}
\mu-\pi \\
q-P\mu  
\end{bmatrix}$. 

We will try to find a matrix $A$ such that when we compute $\alpha'=A\alpha$ and $V'=AVA^T$ we will obtain that the joint distribution from above is equal to $exp\left \{-\frac{1}{2}\alpha'^TV'^{-1}\alpha'\right \}$. 

Let 
\begin{align*}
A&=
\begin{bmatrix}
I_n & -\left((\tau\Sigma)^{-1}+P^T\Omega^{-1}P \right)^{-1}P^T\Omega^{-1} \\
0 & I_k
\end{bmatrix}
\begin{bmatrix}
I_n & 0 \\
P & I_k
\end{bmatrix}= \\
&=
\begin{bmatrix}
I_n-\left( (\tau\Sigma)^{-1}+ P^T\Omega^{-1}P \right)^{-1}P^T\Omega^{-1}P & -\left( (\tau\Sigma)^{-1}+ P^T\Omega^{-1}P \right)^{-1}P^T\Omega^{-1} \\
P & I_k 
\end{bmatrix}
\end{align*}

But since $I_n=\left( (\tau\Sigma)^{-1}+ P^T\Omega^{-1}P \right)^{-1}\left( (\tau\Sigma)^{-1}+ P^T\Omega^{-1}P \right)$, we obtain that the first entry in the above matrix can be written as 
\begin{gather*}
\left( (\tau\Sigma)^{-1}+ P^T\Omega^{-1}P \right)^{-1} \left( (\tau\Sigma)^{-1}+ P^T\Omega^{-1}P-P^T\Omega^{-1}P  \right)=\\
\left( (\tau\Sigma)^{-1}+ P^T\Omega^{-1}P \right)^{-1}(\tau\Sigma)^{-1}
\end{gather*}

Hence, we can get the final form of $A$ is: 
\begin{align*}
A=
\begin{bmatrix}
\left( (\tau\Sigma)^{-1}+ P^T\Omega^{-1}P \right)^{-1}(\tau\Sigma)^{-1} & -\left( (\tau\Sigma)^{-1}+ P^T\Omega^{-1}P \right)^{-1}P^T\Omega^{-1} \\
P & I_k
\end{bmatrix}
\end{align*}

\textbf{Note}: $det(A)=1$. The determinant of a $2 \times 2$ block matrix is computed using the same formula as that for a normal $2 \times 2$ matrix, but we have to be careful when we look at the order in which we multiply: 
\begin{align*}
det(A)&=det\{ \left((\tau\Sigma)^{-1}+ P^T\Omega^{-1}P \right)^{-1}(\tau\Sigma)^{-1}+ \\
&+\left( (\tau\Sigma)^{-1}+ P^T\Omega^{-1}P \right)^{-1}P^T\Omega^{-1}P\}= \\
&=det\{\left((\tau\Sigma)^{-1}+ P^T\Omega^{-1}P \right)^{-1}\left( (\tau\Sigma)^{-1}+ P^T\Omega^{-1}P \right)\}=det(I)=1
\end{align*}

Finally, we are ready to find our $\alpha'$ and $V'$: 
\begin{gather*}
\alpha'=A\alpha=A\begin{bmatrix} \mu-\pi \\q-P\mu\end{bmatrix}= \\
\begin{bmatrix}
\left((\tau\Sigma)^{-1}+ P^T\Omega^{-1}P \right)^{-1} (\tau\Sigma)^{-1}(\mu-\pi)-\left((\tau\Sigma)^{-1}+ P^T\Omega^{-1}P \right)^{-1}P^T\Omega^{-1}(q-P\mu)\\
P\mu-P\pi+q-P\mu
\end{bmatrix}
\end{gather*}

Now by factoring the common term in the $1^{st}$ entry of the vector, we obtain: 

\begin{gather*}
\left((\tau\Sigma)^{-1}+ P^T\Omega^{-1}P \right)^{-1}\left( (\tau\Sigma)^{-1}\mu-(\tau\Sigma)^{-1}\pi-P^T\Omega^{-1}q+P^T\Omega^{-1}P\mu\right)=\\
=\left((\tau\Sigma)^{-1}+ P^T\Omega^{-1}P \right)^{-1}\left[ \left((\tau\Sigma)^{-1}+P^T\Omega^{-1}P\right)\mu -(\tau\Sigma)^{-1}\pi-P^T\Omega^{-1}q\right]=\\
=\mu- \left((\tau\Sigma)^{-1}+ P^T\Omega^{-1}P \right)^{-1}\left( (\tau\Sigma)^{-1}\pi+P^T\Omega^{-1}q \right)
\end{gather*}

And if we let $\bar{\mu}=\left((\tau\Sigma)^{-1}+ P^T\Omega^{-1}P \right)^{-1}\left( (\tau\Sigma)^{-1}\pi+P^T\Omega^{-1}q \right)$, we obtain that \begin{equation} \label{eq:1}
\alpha'=A\alpha=\begin{bmatrix} 
\mu-\bar{\mu} \\
q-P\pi
\end{bmatrix} 
\end{equation}

Now we are ready to move to the computation of $V'$:
\begin{align*}
V'&=AVA^T=\begin{bmatrix}
\left((\tau\Sigma)^{-1}+ P^T\Omega^{-1}P \right)^{-1} & -\left((\tau\Sigma)^{-1}+ P^T\Omega^{-1}P \right)^{-1}P^T \\
\tau P\Sigma & \Omega
\end{bmatrix}A^T\\
\end{align*}

And now, in order to compute $A^T$ we can use the fact that for any 2 matrices $(AB)^T=B^TA^T$ and also the fact that $\Sigma$ is symmetric, hence $\Sigma^{-1}$ is also symmetric. Moreover, the same reasoning can be done for $\Omega$. Hence, $(P^T\Omega^{-1}P)^T=P^T\Omega^{-1}P$. Now if we look at the first entry first column in the $2x2$ block matrix, we notice that it is actually equal to $(\tau\Sigma)^{-1} \left((\tau\Sigma)^{-1}+ P^T\Omega^{-1}P \right)^{-1}$. Similarly the second row first column in the block matrix is actually $-\Omega^{-1}P\left((\tau\Sigma)^{-1}+ P^T\Omega^{-1}P \right)^{-1}$. 

Finally, we can compute $V'$. The first entry in the first column will be: 
\begin{gather*}
\left((\tau\Sigma)^{-1}+ P^T\Omega^{-1}P \right)^{-1}(\tau \Sigma)^{-1}\left((\tau\Sigma)^{-1}+ P^T\Omega^{-1}P \right)^{-1}+ \\
+\left((\tau\Sigma)^{-1}+ P^T\Omega^{-1}P \right)^{-1}P^T\Omega^{-1}P\left((\tau\Sigma)^{-1}+ P^T\Omega^{-1}P \right)^{-1}=\\
=\left((\tau\Sigma)^{-1}+ P^T\Omega^{-1}P \right)^{-1}\left((\tau\Sigma)^{-1}+ P^T\Omega^{-1}P \right)\left((\tau\Sigma)^{-1}+ P^T\Omega^{-1}P \right)^{-1}=\\
=\left((\tau\Sigma)^{-1}+ P^T\Omega^{-1}P \right)^{-1}
\end{gather*}

Hence, the whole matrix $V'$ will be: 
\begin{align*}
V'&=\begin{bmatrix}
\left((\tau\Sigma)^{-1}+ P^T\Omega^{-1}P \right)^{-1} & 0 \\
0 & \Omega+\tau P \Sigma P^T
\end{bmatrix} \Rightarrow \\
\Rightarrow {V'}^{-1}&=\begin{bmatrix}
\left((\tau\Sigma)^{-1}+ P^T\Omega^{-1}P \right) & 0 \\
0 & (\Omega+\tau P \Sigma P^T)^{-1}
\end{bmatrix}
\end{align*}

Now that we have both the $V'$ and $\alpha'$, it is a lot easier to notice that the joint will be: 
\begin{align*}
f(q,\mu)&\propto exp \left \{ -\frac{1}{2}\alpha^T V^{-1}\alpha \right \}=exp \left \{ -\frac{1}{2} \alpha'^T A^{-T} A^T V'^{-1} AA^{-1}\alpha' \right \}=\\
&=exp \left \{ -\frac{1}{2} \alpha'^TV'^{-1}\alpha' \right \}
\end{align*}

But this is the kernel of a multivariate normal. Also, from equation (\ref{eq:1}) and from the above form of the matrix $V'$, we obtain that :
\begin{gather*}
\mu \sim N\left(\bar{\mu},\left((\tau\Sigma)^{-1}+ P^T\Omega^{-1}P \right)^{-1}\right), \\
\bar{\mu}=\left((\tau\Sigma)^{-1}+ P^T\Omega^{-1}P \right)^{-1}\left( (\tau\Sigma)^{-1}\pi+P^T\Omega^{-1}q \right)
\end{gather*}

Furthermore, from this solution, we obtained that: 
\begin{gather*}
q\sim N\left( P\pi, \left(\Omega+ P (\tau\Sigma) P^T\right) \right)
\end{gather*}

If we denote by $M^{-1}$ the covariance matrix and we take the above as been the new prior on the expected returns: $r|\mu\sim N(\mu,\Sigma)$, we obtain that the joint is: 

\begin{align*}
f(r,\mu)&\propto exp \left\{ -\frac{1}{2} \left( (\mu-\bar{\mu})^T M(\mu-\bar{\mu}) + (r-\mu)^T\Sigma^{-1}(r-\mu)  \right)  \right\}= \\
&= exp \left\{ -\frac{1}{2} \left( (\mu-\bar{\mu})^T M(\mu-\bar{\mu}) + (\mu-r)^T\Sigma^{-1}(\mu-r)  \right)  \right\}
\end{align*}

Using lemma \ref{lemma2} with $y=\mu,a=\bar{\mu},b=r,A=M,B=\Sigma^{-1}$, we can conclude that: 
\begin{align*}
f(r,\mu) &\propto exp \left\lbrace -\frac{1}{2}(\mu-\mu^*)^T(M+\Sigma^{-1})(\mu-\mu^*)\right\rbrace\cdot \\
&\cdot exp \left\lbrace (r-\bar{\mu})^T \left(M^{-1}+\left({\Sigma^{-1}}\right)^{-1}\right)^{-1} (r-\bar{\mu}) \right\rbrace
\end{align*}

Hence the posterior of the returns fallows also a normal distribution: 
\begin{gather*}
r\sim N(\bar{\mu},M^{-1}+\Sigma)
\end{gather*}

\section{\\Proof of Approximation using Volterra Integrals} \label{approx proof}

As mentioned before, Bellman in his book \textit{Introduction to Matrix Analysis} shows an even more general result than what we need. The matrix exponential $x(t)=e^{(A_0+cB_0)t}$ satisfies the Volterra integral equation: 
\begin{gather*}
X(t)=e^{A_0t}+c\int_{0}^{t}e^{A_0(t-s)}B_0X(s)ds, 0<t<\infty
\end{gather*} 

Now if we let in the above equation $A_0=-\Lambda$, $B_0=\Lambda-A$, $c=1$ and remembering that $\Lambda=log(S)$ we obtain: 
\begin{gather*}
X(t)={S}^{-t}-\int_{0}^{t}S^{s-t}(A-\Lambda)X(v)dv, 0<t<\infty, 
\end{gather*}

Since we want to approximate $e^{-A}$, we let in the above equation $t=1$ and we repeatedly replace $X$: 
\begin{gather*}
e^{-A}=X(1)=S^{-1}-\int_{0}^{1} S^{s-1}(A-\Lambda)S^{-s}ds=\\
=S^{-1} -\int_{0}^{1}S^{s-1}(A-\Lambda) \left( S^{-s}-\int_{0}^{s} S^{u-s}(A-\lambda)X(u)du \right)ds=\\
=S^{-1}-\int_{0}^{1} S^{s-1}(A-\Lambda)S^{-s}ds+\int_{0}^{1}\int_{0}^{s}S^{s-1}(A-\Lambda)S^{u-s}(A-\Lambda)\cdot \\
\cdot \left( S^{-u}-\int_{0}^{u} S^{v-u}(A-\lambda)X(v)dv \right) duds \approx \\
\approx S^{-1}-\int_{0}^{1} S^{s-1}(A-\Lambda)S^{-s}ds+\int_{0}^{1}\int_{0}^{s}S^{s-1}(A-\Lambda)S^{u-s}(A-\Lambda)S^{-u}duds
\end{gather*} 

Where this is an approximation because the triple and higher order integrals were ignored. The conditional pdf of the returns is:
\begin{gather*}
f(r_1,...,r_m|\mu,\Sigma)\propto exp\lbrace -\frac{m}{2}Tr(A+Se^{-A})  \rbrace
\end{gather*}

Hence, from the Volterra approximation, by multiplying by $S$ and taking the trace, we obtain:
\begin{gather*}
Tr(Se^{-A})\approx n-\int_{0}^{1} Tr\left( S^s(A-\Lambda)S^{-s} \right)ds+ \\
+\int_{0}^{1}\int_{0}^{s} Tr\left( S^{s}(A-\Lambda)S^{u-s}(A-\Lambda)S^{-u}duds \right)
\end{gather*}

The first integral is easier to compute: 
\begin{gather*}
\int_{0}^{1} Tr\left( S^s(A-\Lambda)S^{-s} \right)ds=\int_{0}^{1} Tr\left(A-\Lambda \right)ds=Tr\left(A-\Lambda \right)
\end{gather*}

The second integral requires more calculations. Before we delve into them, let us write the spectral decomposition of $S$ as $S=E_0D_0E_0^T$. If we define the matrix $log$ through the Taylor series expansion, and by suing the fact that $E_0$ is orthonormal, we obtain that the spectral decomposition of $log(S)$ is $\Lambda=log(S)=E_0log(D_0)E_0^T$. Also, let us make another notation: $B=E_0^T(A-\Lambda)E_0\Rightarrow E_0BE_0^T=A-\Lambda$:
\begin{gather*}
Tr \left( S^s(A-\Lambda)S^{u-s}(A-\Lambda)S^{-u} \right)=Tr \left( (A-\Lambda)S^{u-s}(A-\Lambda)S^{-(u-s)} \right)=\\
=Tr \left( E_0BD_0^{u-s}B D_0^{-(u-s)}E_0^T \right)=Tr \left( BD_0^{u-s}B D_0^{-(u-s)} \right)
\end{gather*}

In order to compute the integral of this \textit{Trace} term, we will try to put it in scalar form: 
\begin{gather*}
BD_0^{u-s}=
\begin{bmatrix}
b_{11}d_{1}^{u-s} & b_{12}d_{2}^{u-s} & ... & b_{1n}d_{n}^{u-s} \\
: & : & ... & :\\
b_{n1}d_{1}^{u-s} & b_{12}d_{2}^{u-s} & ... & b_{1n}d_{n}^{u-s}
\end{bmatrix}
\end{gather*}

For the matrix $BD_0^{-(u-s)}$ we obtain a similar result, the only difference is that $d_{i}^{u-s}$ is replaced by $\frac{1}{d_{i}^{u-s}}$. Also, from the spectral decomposition, please note that $d_{i}$ are the eigenvalues of $S$.

Since we need the $Tr \left( BD_0^{u-s}B D_0^{-(u-s)} \right)$, we will only compute the diagonal entries of this matrix: 
\begin{gather*}
diag \left(BD_0^{u-s}BD_0^{-(u-s)} \right)= \\
\begin{bmatrix}
b_{11}^2+b_{12}b_{21}\left( \frac{d_{2}}{d_{1}} \right)^{u-s}+b_{13}b_{31}\left( \frac{d_{3}}{d_{1}} \right)^{u-s}+ \cdots + b_{1n}b_{n1}\left( \frac{d_{n}}{d_{1}} \right)^{u-s}\\
b_{21}b_{12}\left(\frac{d_1}{d_2}\right)^{u-s}+b_{22}^2+\cdots + b_{2n}b_{n2}\left(\frac{d_n}{d_2}\right)^{u-s} \\
:\\
b_{n1}b_{1n}\left(\frac{d_n}{d_1}\right)^{u-s}+b_{n2}b_{2n}\left(\frac{d_n}{d_2}\right)^{u-s}+\cdots + b_{nn}^2
\end{bmatrix}
\end{gather*}

But we know that $B$ is symmetric. Therefore, we obtain that:
\begin{gather*}
\int_0^1\int_0^s Tr \left(  BD_0^{u-s}B D_0^{-(u-s)} \right)duds= \sum_{i=1}^{n} \int_0^1\int_0^s  b_{ii}^2duds+ \\
+\sum_{i\neq j}^n \int_0^1\int_0^s b_{ij}^2\left(\frac{d_i}{d_j}\right)^{u-s} duds \text{, where we have that}\\
\sum_{i=1}^{n}\int_0^1\int_0^s  b_{ii}^2duds=\sum_{i=1}^{n}\frac{b_{ii}^2}{2} \text{ and also}\\
\sum_{i\neq j}^n \int_0^1\int_0^s b_{ij}^2\left(\frac{d_i}{d_j}\right)^{u-s} duds= \sum_{i\neq j}^n \int_0^1 b_{ij}^2\left(\frac{d_i}{d_j}\right)^{u-s} \cdot \\
\cdot \frac{1}{\log(d_i)-\log(d_j)}\Big|_0^sds=
\sum_{i\neq j} \frac{b_{ij}^2}{\log(d_i)-\log(d_j)}\int_0^1 1-\left( \frac{d_i}{d_j} \right)^{-s}ds=\\
=\sum_{i\neq j} \frac{b_{ij}^2}{\log(d_i)-\log(d_j)} \left( 1-\left(\frac{d_j}{d_i} \right)^s \frac{1}{\log(d_j)-\log(d_i)}  \right)\Big|_0^1=\\
=\sum_{i\neq j} \frac{b_{ij}^2}{\log(d_i)-\log(d_j)} + \sum_{i \neq j} b_{ij}^2\frac{\frac{d_j}{d_i}-1}{(\log(d_i)-\log(d_j))^2}=\\
=\sum_{i< j} \left( \frac{b_{ij}^2}{\log(d_i)-\log(d_j)} + \frac{b_{ji}^2}{\log(d_j)-\log(d_i)} \right) + \sum_{i<j} b_{ij}^2\frac{\frac{d_j}{d_i}+\frac{d_i}{d_j}-2}{(\log(d_i)-\log(d_j))^2}=\\
=0+\sum_{i<j} b_{ij}^2\frac{\frac{d_j}{d_i}+\frac{d_i}{d_j}-2}{(\log(d_i)-\log(d_j))^2}
\end{gather*}
Finally, by adding the two double integrals, we obtain that 
\begin{gather*}
\int_0^1\int_0^s Tr \left(  BD_0^{u-s}B D_0^{-(u-s)} \right)duds=n-Tr(A)+Tr(\Lambda)+\\
+\frac{1}{2}\sum_{i=0}^n b_{ii}^2+\sum_{i<j} b_{ij}^2\frac{\frac{d_j}{d_i}+\frac{d_i}{d_j}-2}{(\log(d_i)-\log(d_j))^2}
\end{gather*}
With the notation of the $\xi_{ij}$ introduced in the paper, we obtain the Volterra approximation represented by equation (\ref{approx dist}).

\section{\\Proof of Proposition \ref{integral proposition}} \label{integral proof}
The following equality holds:
\begin{gather*}
f(\alpha|\sigma_1^2,\sigma_2^2)=\int_\theta det(\Delta)^{-\frac{1}{2}}exp\left\lbrace -\frac{1}{2}(\alpha-J\theta)^T \Delta^{-1}(\alpha-J\theta) \right\rbrace d\theta=\\
=2\pi det(\Delta)^{-\frac{1}{2}}det(J^T\Delta^{-1}J)^{-\frac{1}{2}}exp \left\lbrace -\frac{1}{2}\alpha^TG\alpha  \right\rbrace \text{, where} \\
G=\left(I_d-J(J^T\Delta^{-1}J)^{-1}J^T\Delta^{-1}\right)^T\Delta^{-1}\left(I_d-J(J^T\Delta^{-1}J)^{-1}J^T\Delta^{-1}\right)
\end{gather*}

\begin{proof}
Before we actually attempt to compute the integral, we would like to put all the quantities in scalar form since this would make our life easier. This brings us to the following two lemmas: 

\begin{lemma} \label{determinants identity}
$det(\Delta)^{-\frac{1}{2}}det(J^T\Delta^{-1}J)^{-\frac{1}{2}}=\frac{1}{\sqrt{n(d-n)}}\left( \sigma_1^2 \right)^{-\frac{n-1}{2}}\left( \sigma_2^2 \right)^{-\frac{d-n-1}{2}}$
\end{lemma}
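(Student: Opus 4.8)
\textbf{Proof proposal for Lemma \ref{determinants identity}.}

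The plan is to exploit the block structure of $\Delta$ and the disjoint-support structure of the two columns of $J$, so that every matrix appearing is either block-diagonal or (eventually) $2\times 2$ diagonal, and the whole identity reduces to bookkeeping of exponents. First I would read off $\det(\Delta)$ directly from its definition as a block-diagonal matrix with blocks $\sigma_1^2 I_n$ and $\sigma_2^2 I_{d-n}$, giving $\det(\Delta)=(\sigma_1^2)^n(\sigma_2^2)^{d-n}$, and likewise $\Delta^{-1}=\mathrm{blockdiag}(\sigma_1^{-2}I_n,\ \sigma_2^{-2}I_{d-n})$.

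Next I would compute $J^T\Delta^{-1}J$, which is a $2\times 2$ matrix. The key observation is that the first column of $J$ is supported on the first $n$ coordinates and the second column on the last $d-n$ coordinates, so these two columns are orthogonal with respect to $\Delta^{-1}$; hence the off-diagonal entries vanish and
\begin{gather*}
J^T\Delta^{-1}J=\begin{bmatrix} n/\sigma_1^2 & 0 \\ 0 & (d-n)/\sigma_2^2 \end{bmatrix},
\end{gather*}
whose determinant is $n(d-n)/(\sigma_1^2\sigma_2^2)$.

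Finally I would substitute both determinants into $\det(\Delta)^{-1/2}\det(J^T\Delta^{-1}J)^{-1/2}$ and collect powers of $\sigma_1^2$ and $\sigma_2^2$: the factor $(\sigma_1^2)^{-n/2}$ from $\det(\Delta)^{-1/2}$ combines with the $(\sigma_1^2)^{1/2}$ coming from $\det(J^T\Delta^{-1}J)^{-1/2}$ to produce $(\sigma_1^2)^{-(n-1)/2}$, and symmetrically for $\sigma_2^2$, while the numerical factors assemble into $1/\sqrt{n(d-n)}$. This matches the claimed right-hand side exactly.

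There is no real obstacle here — the statement is a direct computation; the only place requiring a moment's care is verifying that $J^T\Delta^{-1}J$ is genuinely diagonal (i.e.\ that the columns of $J$ have disjoint support), and then tracking the half-integer exponents so that the $-1$ shifts in the exponents $-(n-1)/2$ and $-(d-n-1)/2$ come out correctly.
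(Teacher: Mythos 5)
Your computation is correct and follows essentially the same route as the paper: compute $J^T\Delta^{-1}J$ as the $2\times 2$ diagonal matrix $\mathrm{diag}(n/\sigma_1^2,\ (d-n)/\sigma_2^2)$, read off $\det(\Delta)=(\sigma_1^2)^n(\sigma_2^2)^{d-n}$ from the block structure, and multiply the two $-\tfrac{1}{2}$ powers to collect the exponents. Nothing is missing.
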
 
\begin{proof}
\begin{gather*}
J^T\Delta^{-1}J=\begin{bmatrix}
\frac{1}{\sigma_1^2}&...&\frac{1}{\sigma_1^2}&0&...&0\\
0&...&0&\frac{1}{\sigma_2^2}&...&\frac{1}{\sigma_2^2}
\end{bmatrix}J=\begin{bmatrix}
\frac{n}{\sigma_1^2}&0\\
0&\frac{d-n}{\sigma_2^2}
\end{bmatrix}
\end{gather*}
Hence, we obtain that $det(J^T\Delta^{-1}J)^{-\frac{1}{2}}=\frac{1}{\sqrt{n(d-n)}}\left( \sigma_1^2 \right)^{\frac{1}{2}}\left( \sigma_2^2 \right)^{\frac{1}{2}}$
Also, clearly since $\Delta$ is diagonal, we obtain that: 
\begin{gather*}
det(\Delta)^{-\frac{1}{2}}=\left( \sigma_1^2 \right)^{-\frac{n}{2}} \left( \sigma_2^2 \right)^{-\frac{d-n}{2}}
\end{gather*}
Multiplying the two determinants, we obtain the desired result.
\end{proof}

Now let us turn our attention to writing in scalar form the term in the exponential: 
\begin{lemma} \label{exponential identity}
$\alpha^TG\alpha=\frac{1}{\sigma_1^2}\sum_{i=1}^{n} (\alpha_i-\overline{\alpha_v})^2+\frac{1}{\sigma_2^2}\sum_{i=n+1}^{d} (\alpha_i-\overline{\alpha_c})^2$, where $\overline{\alpha_v}$ is the average of the $\alpha's$ on the main diagonal (i.e. those that originate from the log of the variance terms of the returns) and $\overline{\alpha_c}$ is the average of all the $\alpha$'s that are on the off diagonal (i.e. those that originate from the log of the covariance terms of the returns).
\end{lemma}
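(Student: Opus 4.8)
The plan is to first simplify the matrix $G$ defining the quadratic form, and then reduce everything to scalar sums using the block-diagonal structure of $\Delta$. Write $P_J := J(J^T\Delta^{-1}J)^{-1}J^T\Delta^{-1}$, so that $G = (I_d - P_J)^T\Delta^{-1}(I_d - P_J)$. Expanding the product gives $G = \Delta^{-1} - \Delta^{-1}P_J - P_J^T\Delta^{-1} + P_J^T\Delta^{-1}P_J$. Since $\Delta^{-1}$ and $(J^T\Delta^{-1}J)^{-1}$ are symmetric, each of $\Delta^{-1}P_J$, $P_J^T\Delta^{-1}$, and $P_J^T\Delta^{-1}P_J$ equals the single symmetric matrix $K := \Delta^{-1}J(J^T\Delta^{-1}J)^{-1}J^T\Delta^{-1}$; for the quadratic term this uses that the inner factor $J^T\Delta^{-1}J$ cancels against its inverse. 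Hence $G = \Delta^{-1} - K$, so that $\alpha^T G\alpha = \alpha^T\Delta^{-1}\alpha - \alpha^T\Delta^{-1}J(J^T\Delta^{-1}J)^{-1}J^T\Delta^{-1}\alpha$.

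Next I would evaluate both terms using $\Delta^{-1} = \mathrm{diag}(\sigma_1^{-2}I_n,\ \sigma_2^{-2}I_{d-n})$. The first term is $\alpha^T\Delta^{-1}\alpha = \frac{1}{\sigma_1^2}\sum_{i=1}^n\alpha_i^2 + \frac{1}{\sigma_2^2}\sum_{i=n+1}^d\alpha_i^2$. For the second term, $J^T\Delta^{-1}\alpha = \left(\tfrac{1}{\sigma_1^2}\sum_{i=1}^n\alpha_i,\ \tfrac{1}{\sigma_2^2}\sum_{i=n+1}^d\alpha_i\right)^T$, and from Lemma \ref{determinants identity} we already have $J^T\Delta^{-1}J = \mathrm{diag}(n/\sigma_1^2,\ (d-n)/\sigma_2^2)$, hence $(J^T\Delta^{-1}J)^{-1} = \mathrm{diag}(\sigma_1^2/n,\ \sigma_2^2/(d-n))$. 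Multiplying out, the second term becomes $\frac{1}{\sigma_1^2}\cdot\frac{(\sum_{i=1}^n\alpha_i)^2}{n} + \frac{1}{\sigma_2^2}\cdot\frac{(\sum_{i=n+1}^d\alpha_i)^2}{d-n} = \frac{n}{\sigma_1^2}\overline{\alpha_v}^2 + \frac{d-n}{\sigma_2^2}\overline{\alpha_c}^2$, using the definitions $\overline{\alpha_v} = \frac{1}{n}\sum_{i=1}^n\alpha_i$ and $\overline{\alpha_c} = \frac{1}{d-n}\sum_{i=n+1}^d\alpha_i$.

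Subtracting, $\alpha^T G\alpha = \frac{1}{\sigma_1^2}\left(\sum_{i=1}^n\alpha_i^2 - n\,\overline{\alpha_v}^2\right) + \frac{1}{\sigma_2^2}\left(\sum_{i=n+1}^d\alpha_i^2 - (d-n)\,\overline{\alpha_c}^2\right)$. Finally, the elementary identity $\sum_{j=1}^{N}(x_j-\bar x)^2 = \sum_{j=1}^{N}x_j^2 - N\bar x^2$ (with $\bar x = \frac{1}{N}\sum_j x_j$), applied first with $N = n$ to the coordinates $\alpha_1,\dots,\alpha_n$ and then with $N = d-n$ to the coordinates $\alpha_{n+1},\dots,\alpha_d$, turns each difference into a centered sum of squares and yields exactly $\frac{1}{\sigma_1^2}\sum_{i=1}^n(\alpha_i-\overline{\alpha_v})^2 + \frac{1}{\sigma_2^2}\sum_{i=n+1}^d(\alpha_i-\overline{\alpha_c})^2$, as claimed.

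The only step requiring genuine care is the collapse of $G$ to $\Delta^{-1} - K$: one must track the transposes carefully and invoke symmetry of $\Delta^{-1}$ and of $(J^T\Delta^{-1}J)^{-1}$ to see that the two cross terms and the quadratic term all coincide with $K$. After that the argument is just bookkeeping with the two diagonal blocks, and I anticipate no further difficulty.
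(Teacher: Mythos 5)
Your proof is correct and follows essentially the same route as the paper's: both reduce $G$ to $\Delta^{-1}-\Delta^{-1}J(J^T\Delta^{-1}J)^{-1}J^T\Delta^{-1}$ and finish with the variance-decomposition identity $\sum_j (x_j-\bar x)^2=\sum_j x_j^2-N\bar x^2$ applied blockwise. The only difference is cosmetic but welcome: you evaluate $\alpha^TK\alpha$ through the two-dimensional vector $J^T\Delta^{-1}\alpha$, which avoids the paper's entrywise expansion of the full $d\times d$ matrix and the resulting bookkeeping with $\sum_{i\neq j}\alpha_i\alpha_j$ cross terms (where the paper in fact has a sign typo that your route sidesteps entirely).
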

\begin{proof}
First of all, one can notice that the formula for $G$ can be simplified for calculation purposes:
\begin{gather*}
G=\left(I_d-J(J^T\Delta^{-1}J)^{-1}J^T\Delta^{-1}\right)^T\Delta^{-1}\left(I_d-J(J^T\Delta^{-1}J)^{-1}J^T\Delta^{-1}\right)=\\
=\Delta^{-1}-\Delta^{-1}J(J^T\Delta^{-1}J)^{-1}J^T\Delta^{-1}-\Delta^{-1}J(J^T\Delta^{-1}J)^{-1}J^T\Delta^{-1}+\\
+\Delta^{-1}J(J^T\Delta^{-1}J)^{-1}J^T\Delta^{-1}=\Delta^{-1}-\Delta^{-1}J(J^T\Delta^{-1}J)^{-1}J^T\Delta^{-1}
\end{gather*}
We remember that we have computed $J^T\Delta^{-1}J$ in lemma \ref{determinants identity}: 
\begin{gather*}
J^T\Delta^{-1}J=\begin{bmatrix}
\frac{n}{\sigma_1^2}&0\\
0&\frac{d-n}{\sigma_2^2}
\end{bmatrix} \text{ and }
\Delta^{-1}J=\begin{bmatrix}
\frac{1}{\sigma_1^2}&0\\
:&:\\
\frac{1}{\sigma_1^2}&0\\
0&\frac{1}{\sigma_2^2}\\
:&:\\
0&\frac{1}{\sigma_2^2}
\end{bmatrix} \Rightarrow \\
\Rightarrow \Delta^{-1}J\left( J^T\Delta^{-1}J \right)=\begin{bmatrix}
\frac{1}{n}&0\\
:&:\\
\frac{1}{n}&0\\
0&\frac{1}{d-n}\\
:&:\\
0&\frac{1}{d-n}
\end{bmatrix}\Rightarrow \\
\Rightarrow \Delta^{-1}J\left( J^T\Delta^{-1}J \right)J^T\Delta^{-1}=\begin{bmatrix}
\frac{1}{n\sigma_1^2}&...&\frac{1}{n\sigma_1^2}&0&...&0\\
:&...&:&:&...&:\\
\frac{1}{n\sigma_1^2}&...&\frac{1}{n\sigma_1^2}&0&...&0\\
0&...&0&\frac{1}{(d-n)\sigma_2^2}&...&\frac{1}{(d-n)\sigma_2^2}\\
:&...&:&:&...&:\\
0&...&0&\frac{1}{(d-n)\sigma_2^2}&...&\frac{1}{(d-n)\sigma_2^2}
\end{bmatrix}
\end{gather*}
Now we just have to subtract this matrix from $\Delta^{-1}$, which is just diagonal, and we can finally compute the desired quantity: 
\begin{gather*}
\alpha^TG\alpha=\sum_{1\leq i\neq j\leq n}\frac{1}{n\sigma_1^2}\alpha_i\alpha_j+\sum_{i=1}^n\frac{n-1}{n\sigma_1^2}\alpha_i^2+\sum_{n+1\leq i\neq j\leq d} \frac{1}{(d-n)\sigma_2^2}\alpha_i\alpha_j+\\
+\sum_{i=n+1}^d \frac{d-n-1}{(d-n)\sigma_2^2}\alpha_i^2
\end{gather*}
By looking at this equation and the one that we have to prove, we realize that if we would manage to show the following identity, we would also prove the lemma:
\begin{gather*}
\sum_{1\leq i\neq j\leq n}\frac{1}{n\sigma_1^2}\alpha_i\alpha_j+\sum_{i=1}^n\frac{n-1}{n\sigma_1^2}\alpha_i^2=\frac{1}{\sigma_1^2}\sum_{i=1}^{n} (\alpha_i-\overline{\alpha_v})^2
\end{gather*}
Let us start from the right hand side:
\begin{gather*}
\sum_{i=1}^{n} (\alpha_i-\overline{\alpha_v})^2=\sum_{i=1}^n \alpha_i^2-n\overline{\alpha}^2=\sum_{i=1}^n \alpha_i^2-\frac{1}{n}\left( \sum_{i=1}^n \alpha_i \right)^2=\\
=\sum_{i=1}^n \alpha_i^2 -\frac{1}{n}\sum_{i=1}^n\alpha_i^2-\frac{1}{n}\sum_{i\neq j}\alpha_i\alpha_j=\sum_{i=1}^n\frac{n-1}{n}\alpha_i^2-\frac{1}{n}\sum_{i\neq j}\alpha_i\alpha_j
\end{gather*}
\end{proof}

Now we finally have all the necessary identities to write the integral in our proposition in scalar form. We would have to prove that:

\begin{gather*}
\int_{\theta_1} exp \left\lbrace -\frac{1}{2\sigma_1^2} \sum_{i=1}^n (\alpha_i-\theta_1)^2 \right\rbrace d\theta_1 \int_{\theta_2} exp \left\lbrace -\frac{1}{2\sigma_2^2} \sum_{i=n+1}^q (\alpha_i-\theta_2)^2 \right\rbrace d\theta_2=\\
=2\pi \frac{\sigma_1}{\sqrt{n}}exp \left\lbrace -\frac{1}{2\sigma_1^2} \sum_{i=1}^{n} (\alpha_i-\overline{\alpha_v})^2 \right\rbrace \frac{\sigma_2}{\sqrt{d-n}} exp \left\lbrace -\frac{1}{2\sigma_2^2} \sum_{i=n+1}^{q} (\alpha_i-\overline{\alpha_c})^2 \right\rbrace
\end{gather*}

Hence, if we manage to show the following identity, we would manage to prove the proposition also: 
\begin{gather*}
\int_{\theta_1} exp \left\lbrace -\frac{1}{2\sigma_1^2} \sum_{i=1}^n (\alpha_i-\theta_1)^2 \right\rbrace d\theta_1=\sqrt{2\pi} \frac{\sigma_1}{\sqrt{n}}exp \left\lbrace -\frac{1}{2\sigma_1^2} \sum_{i=1}^{n} (\alpha_i-\overline{\alpha_v})^2 \right\rbrace
\end{gather*}

Let us start from the left hand side and subtract and add the average $\overline{\alpha_v}$ in each term of the sum from the exponential:
\begin{gather*}
LHS=\int_{\theta_1} exp \left\lbrace -\frac{1}{2\sigma_1^2} \sum_{i=1}^n (\alpha_i-\theta_1)^2 \right\rbrace d\theta_1=\\
=\int_{\theta_1} exp \left\lbrace -\frac{1}{2\sigma_1^2} \sum_{i=1}^n (\alpha_i-\overline{\alpha_v}+\overline{\alpha_v}-\theta_1)^2 \right\rbrace d\theta_1=\\
=exp \left\lbrace -\frac{1}{2\sigma_1^2} \sum_{i=1}^n (\alpha_i-\overline{\alpha_v})^2 \right\rbrace \cdot \\
\cdot \int_{\theta_1} exp \left\lbrace -\frac{1}{2\sigma_1^2} \left( \sum_{i=1}^n(\overline{\alpha_v}-\theta_1)^2 + 2\sum_{i=1}^n (\alpha_i-\overline{\alpha_v})(\overline{\alpha_v}-\theta_1) \right) \right\rbrace d\theta_1=\\
=exp \left\lbrace -\frac{1}{2\sigma_1^2} \sum_{i=1}^n (\alpha_i-\overline{\alpha_v})^2 \right\rbrace \cdot \\ 
\cdot\int_{\theta_1} exp \left\lbrace -\frac{1}{2\sigma_1^2} \left( n(\overline{\alpha_v}-\theta_1)^2 + 2(\overline{\alpha_v}-\theta_1)\sum_{i=1}^n (\alpha_i-\overline{\alpha_v})\right) \right\rbrace d\theta_1=\\
=exp \left\lbrace -\frac{1}{2\sigma_1^2} \sum_{i=1}^n (\alpha_i-\overline{\alpha_v})^2 \right\rbrace \int_{\theta_1} exp \left\lbrace -\frac{1}{2\left(\frac{\sigma_1}{\sqrt{n}}\right)^2} (\overline{\alpha_v}-\theta_1)^2 \right\rbrace d\theta_1
\end{gather*}

Now we recognize that the term inside the integral is close to the density of a normal distribution. Hence, this gives us the idea of doing the change of variables: 
\begin{gather*}
y_1=\frac{\theta_1-\overline{\alpha_v}}{\frac{\sigma_1}{\sqrt{n}}} \Rightarrow dy_1=\frac{\sqrt{n}}{\sigma_1}d\theta_1 \Rightarrow d\theta_1=\frac{\sigma_1}{\sqrt{n}}dy_1\Rightarrow \\
\Rightarrow LHS=\sqrt{2\pi}exp \left\lbrace -\frac{1}{2\sigma_1^2} \sum_{i=1}^n (\alpha_i-\overline{\alpha_v})^2 \right\rbrace \int_{\theta_1} \frac{1}{\sqrt{2\pi}} exp \left\lbrace -\frac{1}{2}y_1^2 \right\rbrace \frac{\sigma_1}{\sqrt{n}} dy_1=\\
=\sqrt{2\pi} \frac{\sigma_1}{\sqrt{n}}exp \left\lbrace -\frac{1}{2\sigma_1^2} \sum_{i=1}^{n} (\alpha_i-\overline{\alpha_v})^2 \right\rbrace
\end{gather*}
As mentioned before, a similar identity can be showed for the second integral that depends solely on $\theta_2$ and this completes the proof of the proposition.

\newpage

\end{proof}

%
%
%
%
%
%
%
%
%
%
%
%
%
%
%
%
%
%
%
%
%
%
%
%
%
%
%
%
%
%
%
%
%
%
%
%
%
%
%
%
%
%
%
%
%
%
%
%
%
%
%
%
%
%
%
%
%
%
%
%
%
%
%
%
%
%
%
%
%
%
%
%
%
%
%
%

\end{document}